\newtheorem{theorem}{Theorem}
\newtheorem{proposition}[theorem]{Proposition}
\newenvironment{proof}[1][Proof]{\noindent\textbf{#1.} }{\ \rule{0.5em}{0.5em}}
\newcommand{\be}{\begin{equation}}
\newcommand{\ee}{\end{equation}}
\newcommand{\bse}{\begin{subequations}}
\newcommand{\ese}{\end{subequations}}
\newcommand{\Z}{\mathbb{Z}}
\newcommand{\ii}{\mathrm{i}}
\newcommand{\e}{\mathrm{e}}
\newcommand{\T}{\mathcal{T}}
\newcommand{\A}{\mathcal{A}}
\newcommand{\U}{\mathcal{U}}
\newcommand{\bpm}{\begin{pmatrix}}
\newcommand{\epm}{\end{pmatrix}}
\newcommand{\bmm}{\begin{matrix}}
\newcommand{\emm}{\end{matrix}}
\newcommand{\Blangle}{\Biggl\langle\bmm} %the big \langle in the bra
\newcommand{\BLvert}{\Biggl\vert\bmm} %the big \vert on the right
\newcommand{\Bvert}{\emm\Biggr\vert\bmm} % the big \vert in the middle of a bra-ket inner product.
\newcommand{\Brangle}{\emm\Biggr\rangle}
\newcommand{\x}{\times}
\newcommand{\II}{$\text{Ising}\times\overline{\text{Ising}}$\xspace}
\newcommand{\is}{1\sigma}
\newcommand{\si}{\sigma 1}
\newcommand{\ip}{1\psi}
\newcommand{\psii}{\psi 1}
\newcommand{\pp}{\psi\psi}
\newcommand{\psis}{\psi\sigma}
\newcommand{\spsi}{\sigma\psi}
\renewcommand{\ss}{\sigma\sigma}
\newcommand{\tchi}{\tilde\chi}
\newcommand{\diag}{\mathrm{Diag}}
\newcommand{\vac}{\gamma}
\newenvironment{smallarray}[1]%define a small matrix enviroment
 {\null\,\vcenter\bgroup\scriptsize
  \arraycolsep=.13885em
  \hbox\bgroup$\array{@{}#1@{}}}
 {\endarray$\egroup\egroup\,\null}
\newcommand{\vlc}[6]{% general vlc
\bigl[\hspace{-0.2em}
\begin{smallarray}{cc|c}
    #1 & #2 & #3\\ [0.2em]
    #4 & #5 & #6
\end{smallarray}\hspace{-0.2em}\bigr] }
\newcommand{\sixj}[6]{% general vlc
\bigl\{\hspace{-0.2em}
\begin{smallarray}{cc|c}
    #1 & #2 & #3\\ [0.2em]
    #4 & #5 & #6
\end{smallarray}\hspace{-0.2em}\bigr\} }
\newcommand{\Fm}[7][0]{%F matrix
%\Fm[<optional]{#2}{#3}{#4}{#5}{#6}{#7}
%#1=0 by default, indicating the F symbol with two upper #3#4 and two lower #5#6 indices. #6 and #7 are matrix indices.
% If #1 !=0, #2#3#4 are upper indices, and #5 is the lower one.
%
\ifthenelse{\NOT \equal{#1}{0}}{%traditional F matrix
\left[F^{#2#3#4}_{#5}\right]_{#6#7}
}
{\bigl[F^{#2#3}_{#4#5}\bigr]^{#6}_{#7}}
}
\newcommand*{\Relbarfill@}{\arrowfill@\Relbar\Relbar\Relbar}
\newcommand*{\xeq}[2][]{\ext@arrow 0055\Relbarfill@{#1}{#2}}
\tikzset{->-/.style={decoration={
  markings,
  mark=at position .5 with {\arrow{>}}},postaction={decorate}}}
\tikzset{-<-/.style={decoration={
  markings,
  mark=at position .5 with {\arrow{<}}},postaction={decorate}}}
\tikzset{arrow data/.style 2 args={%
      decoration={%
         markings,
         mark=at position #1 with \arrow{#2}},
         postaction=decorate}
      }%
\newcommand{\Yup}[5][0]{
\begin{tikzpicture}[scale=0.5,baseline={(current bounding box.center)}]
  \ifthenelse{\NOT \equal{#2}{-1} \AND \NOT \equal{#2}{-2}}  {%draw solid line
    \draw[line cap=round]  (150:1) -- (0,0) ; %draw the line
    \node at(-0.5,-0.05) {\scalebox{0.7}{$#2$}};
    \ifthenelse{\equal{#1}{1}} {
      \draw [->,>=stealth',line width=0.01pt] (0,0) -- (150:0.5);
      }{}
    }{
    \ifthenelse{\equal{#2}{-1}} {%draw dotted line
    \draw[dotted,line cap=round] (150:1) -- (0,0) ;} %draw the line
    {}
    \ifthenelse{ \equal{#2}{-2}} {%draw dashed line
    \draw[dashed,line cap=round] (150:1) -- (0,0);
    }{}
    }

  \ifthenelse{\NOT \equal{#3}{-1} \AND \NOT \equal{#3}{-2}}  {%draw solid line
    \draw[line cap=round]  (30:1)  -- (0,0) ; %draw the line
    \node at(0.5,-0.05) {\scalebox{0.7}{$#3$}};
    \ifthenelse{\equal{#1}{1}} {
       \draw [->,>=stealth',line width=0.01pt] (0,0) -- (30:0.5);
          }{}
    }{
    \ifthenelse{ \equal{#3}{-1}} {%draw dotted line
    \draw[dotted,line cap=round] (30:1) -- (0,0) ;} %draw the line
    {}
    \ifthenelse{ \equal{#3}{-2}} {%draw dashed line
    \draw[dashed,line cap=round] (30:1) -- (0,0);
    }{}
    }    

  \ifthenelse{\NOT \equal{#4}{-1} \AND \NOT \equal{#4}{-2}}  {%draw solid line
    \draw[line cap=round] (0,-1) -- (0,0); %draw the line
    \node at(0.35,-0.6) {\scalebox{0.7}{$#4$}};  
    \ifthenelse{\equal{#1}{1}} {
      \draw [->,>=stealth',line width=0.01pt] (0,-1) -- (0,-0.4);
          }{}    
    }{
    \ifthenelse{\equal{#4}{-1}} {%draw dotted line
    \draw[dotted,line cap=round] (0,-1) -- (0,0) ;} %draw the line
    {}
    \ifthenelse{\equal{#4}{-2}} {%draw dashed line
    \draw[dashed,line cap=round] (0,-1) -- (0,0);
    }{}
    } 
  \node[above] at(0,-0.1) {\scalebox{0.7}{$#5$}};
  \end{tikzpicture}
}
\newcommand{\Ydown}[5][0]{
\begin{tikzpicture}[scale=0.5,baseline={(current bounding box.center)}]
  \ifthenelse{\NOT \equal{#2}{-1} \AND \NOT \equal{#2}{-2}}  {%draw solid line
    \draw[line cap=round]  (210:1) -- (0,0) ; %draw the line
    \node at(-0.5,0.01) {\scalebox{0.7}{$#2$}};
    \ifthenelse{\equal{#1}{1}} {
      \draw [->,>=stealth',line width=0.01pt] (210:0.5) -- (210:0.4);
      }{}
    }{
    \ifthenelse{\equal{#2}{-1}} {%draw dotted line
    \draw[dotted,line cap=round] (210:1) -- (0,0) ;} %draw the line
    {}
    \ifthenelse{ \equal{#2}{-2}} {%draw dashed line
    \draw[dashed,line cap=round] (210:1) -- (0,0);
    }{}
    }

  \ifthenelse{\NOT \equal{#3}{-1} \AND \NOT \equal{#3}{-2}}  {%draw solid line
    \draw[line cap=round]  (330:1)  -- (0,0) ; %draw the line
    \node at(0.5,0.05) {\scalebox{0.7}{$#3$}};
    \ifthenelse{\equal{#1}{1}} {
       \draw [->,>=stealth',line width=0.01pt] (330:0.5) -- (330:0.4);
          }{}
    }{
    \ifthenelse{ \equal{#3}{-1}} {%draw dotted line
    \draw[dotted,line cap=round] (330:1) -- (0,0) ;} %draw the line
    {}
    \ifthenelse{ \equal{#3}{-2}} {%draw dashed line
    \draw[dashed,line cap=round] (330:1) -- (0,0);
    }{}
    }    

  \ifthenelse{\NOT \equal{#4}{-1} \AND \NOT \equal{#4}{-2}}  {%draw solid line
    \draw[line cap=round] (0,0) -- (0,1); %draw the line
    \node at(0.4,0.6) {\scalebox{0.7}{$#4$}};  
    \ifthenelse{\equal{#1}{1}} {
      \draw [->,>=stealth',line width=0.01pt] (0,0) -- (0,0.5);
          }{}    
    }{
    \ifthenelse{\equal{#4}{-1}} {%draw dotted line
    \draw[dotted,line cap=round] (0,0) -- (0,1) ;} %draw the line
    {}
    \ifthenelse{\equal{#4}{-2}} {%draw dashed line
    \draw[dashed,line cap=round] (0,0) -- (0,1);
    }{}
    } 
  \node[below] at(0, 0.1) {\scalebox{0.7}{$#5$}};  
  \end{tikzpicture}
}
\newcommand{\Hgraph}[6][0]{
  \begin{tikzpicture}[scale=0.8,baseline=-3]
    %define coordinates
    \coordinate (ul) at (-0.375,0.75);
    \coordinate (lr) at (0.375,-0.75);
    \coordinate (ll) at (-0.375,-0.75);
    \coordinate (ur) at (0.375,0.75);    
    \coordinate (l) at ($ (ll) ! 0.6 ! (ul) $);
    \coordinate (r) at ($ (lr) ! 0.4 ! (ur) $);

  \ifthenelse{\NOT \equal{#2}{-1} \AND \NOT \equal{#2}{-2}}  {%draw solid line
    \draw[line cap=round] (l) -- (ul); %draw the line
    \node[left] at($ (ll) ! 0.85 ! (ul) $) {\scalebox{0.8}{$#2$}};  
    \ifthenelse{\equal{#1}{1}} {
      \draw[->,>=stealth',line width=0.01pt] (l) -- ($ (ll) ! 0.85 ! (ul) $);
          }{}    
    }{
    \ifthenelse{\equal{#2}{-1}} {%draw dotted line
    \draw[dotted,line cap=round] (l) -- (ul) ;} %draw the line
    {}
    \ifthenelse{\equal{#2}{-2}} {%draw dashed line
    \draw[dashed,line cap=round] (l) -- (ul);
    }{}
    }    
    
  \ifthenelse{\NOT \equal{#3}{-1} \AND \NOT \equal{#3}{-2}}  {%draw solid line
    \draw[line cap=round] (r) -- (ur); %draw the line
    \node[right] at($ (lr) ! 0.85 ! (ur) $) {\scalebox{0.8}{$#3$}};  
    \ifthenelse{\equal{#1}{1}} {
      \draw[->,>=stealth',line width=0.01pt] (r) -- ($ (lr) ! 0.85 ! (ur) $);
          }{}    
    }{
    \ifthenelse{\equal{#3}{-1}} {%draw dotted line
    \draw[dotted,line cap=round] (r) -- (ur) ;} %draw the line
    {}
    \ifthenelse{\equal{#3}{-2}} {%draw dashed line
    \draw[dashed,line cap=round] (r) -- (ur);
    }{}
    }

  \ifthenelse{\NOT \equal{#4}{-1} \AND \NOT \equal{#4}{-2}}  {%draw solid line
    \draw[line cap=round] (ll) -- (l); %draw the line
    \node[left] at($ (ll) ! 0.2 ! (ul) $) {\scalebox{0.8}{$#4$}};  
    \ifthenelse{\equal{#1}{1}} {
      \draw[->,>=stealth',line width=0.01pt] (ll) -- ($ (ll) ! 0.3 ! (ul) $);
          }{}    
    }{
    \ifthenelse{\equal{#4}{-1}} {%draw dotted line
    \draw[dotted,line cap=round] (ll) -- (l) ;} %draw the line
    {}
    \ifthenelse{\equal{#4}{-2}} {%draw dashed line
    \draw[dashed,line cap=round] (ll) -- (l);
    }{}
    }
    
  \ifthenelse{\NOT \equal{#5}{-1} \AND \NOT \equal{#5}{-2}}  {%draw solid line
    \draw[line cap=round] (lr) -- (r); %draw the line
    \node[right] at($ (lr) ! 0.2 ! (ur) $) {\scalebox{0.8}{$#5$}};  
    \ifthenelse{\equal{#1}{1}} {
      \draw[->,>=stealth',line width=0.01pt] (lr) -- ($ (lr) ! 0.3 ! (ur) $);
          }{}    
    }{
    \ifthenelse{\equal{#5}{-1}} {%draw dotted line
    \draw[dotted,line cap=round] (lr) -- (r) ;} %draw the line
    {}
    \ifthenelse{\equal{#5}{-2}} {%draw dashed line
    \draw[dashed,line cap=round] (lr) -- (r);
    }{}
    }    
    
  \ifthenelse{\NOT \equal{#6}{-1} \AND \NOT \equal{#6}{-2}}  {%draw solid line
    \draw[line cap=round] (r) -- (l); %draw the line
    \node[above] at($ (r) ! 0.5 ! (l) $) {\scalebox{0.8}{$#6$}};  
    \ifthenelse{\equal{#1}{1}} {
      \draw[->,>=stealth',line width=0.01pt] (r) -- ($ (r) ! 0.6 ! (l) $);
          }{}    
    }{
    \ifthenelse{\equal{#6}{-1}} {%draw dotted line
    \draw[dotted,line cap=round] (r) -- (l) ;} %draw the line
    {}
    \ifthenelse{\equal{#6}{-2}} {%draw dashed line
    \draw[dashed,line cap=round] (r) -- (l);
    }{}
    }

   \end{tikzpicture}
  }
 \newcommand{\Xgraph}[6][0]{
  \begin{tikzpicture}[scale=0.8,baseline=-3]
    %define coordinates
    \coordinate (ul) at (-0.5,0.75);
    \coordinate (lr) at (0.5,-0.75);
    \coordinate (ll) at (-0.5,-0.75);
    \coordinate (ur) at (0.5,0.75);    
    \coordinate (u) at (0, 0.25);
    \coordinate (d) at (0,-0.25);
    
  \ifthenelse{\NOT \equal{#2}{-1} \AND \NOT \equal{#2}{-2}}  {%draw solid line
    \draw[line cap=round] (u) -- (ul); %draw the line
    \node[left] at($ (u) ! 0.4 ! (ul) $) {\scalebox{0.8}{$#2$}};  
    \ifthenelse{\equal{#1}{1}} {
      \draw[->,>=stealth',line width=0.01pt] (u) -- ($ (u) ! 0.6 ! (ul) $);
          }{}    
    }{
    \ifthenelse{\equal{#2}{-1}} {%draw dotted line
    \draw[dotted,line cap=round] (u) -- (ul) ;} %draw the line
    {}
    \ifthenelse{\equal{#2}{-2}} {%draw dashed line
    \draw[dashed,line cap=round] (u) -- (ul);
    }{}
    }    
    
  \ifthenelse{\NOT \equal{#3}{-1} \AND \NOT \equal{#3}{-2}}  {%draw solid line
    \draw[line cap=round] (u) -- (ur); %draw the line
    \node[right] at($ (u) ! 0.45 ! (ur) $) {\scalebox{0.8}{$#3$}};  
    \ifthenelse{\equal{#1}{1}} {
      \draw[->,>=stealth',line width=0.01pt] (u) -- ($ (u) ! 0.6 ! (ur) $);
          }{}    
    }{
    \ifthenelse{\equal{#3}{-1}} {%draw dotted line
    \draw[dotted,line cap=round] (u) -- (ur) ;} %draw the line
    {}
    \ifthenelse{\equal{#3}{-2}} {%draw dashed line
    \draw[dashed,line cap=round] (u) -- (ur);
    }{}
    }

  \ifthenelse{\NOT \equal{#4}{-1} \AND \NOT \equal{#4}{-2}}  {%draw solid line
    \draw[line cap=round] (ll) -- (d); %draw the line
    \node[left] at($ (ll) ! 0.6 ! (d) $) {\scalebox{0.8}{$#4$}};  
    \ifthenelse{\equal{#1}{1}} {
      \draw[->,>=stealth',line width=0.01pt] (ll) -- ($ (ll) ! 0.6 ! (d) $);
          }{}    
    }{
    \ifthenelse{\equal{#4}{-1}} {%draw dotted line
    \draw[dotted,line cap=round] (ll) -- (d) ;} %draw the line
    {}
    \ifthenelse{\equal{#4}{-2}} {%draw dashed line
    \draw[dashed,line cap=round] (ll) -- (d);
    }{}
    }
    
  \ifthenelse{\NOT \equal{#5}{-1} \AND \NOT \equal{#5}{-2}}  {%draw solid line
    \draw[line cap=round] (lr) -- (d); %draw the line
    \node[right] at($ (lr) ! 0.6 ! (d) $) {\scalebox{0.8}{$#5$}};  
    \ifthenelse{\equal{#1}{1}} {
      \draw[->,>=stealth',line width=0.01pt] (lr) -- ($ (lr) ! 0.6 ! (d) $);
          }{}    
    }{
    \ifthenelse{\equal{#5}{-1}} {%draw dotted line
    \draw[dotted,line cap=round] (lr) -- (d) ;} %draw the line
    {}
    \ifthenelse{\equal{#5}{-2}} {%draw dashed line
    \draw[dashed,line cap=round] (lr) -- (d);
    }{}
    }    
    
  \ifthenelse{\NOT \equal{#6}{-1} \AND \NOT \equal{#6}{-2}}  {%draw solid line
    \draw[line cap=round] (d) -- (u); %draw the line
    \node[right] at($ (d) ! 0.5 ! (u) $) {\scalebox{0.8}{$#6$}};  
    \ifthenelse{\equal{#1}{1}} {
      \draw[->,>=stealth',line width=0.01pt] (d) -- ($ (d) ! 0.6 ! (u) $);
          }{}    
    }{
    \ifthenelse{\equal{#6}{-1}} {%draw dotted line
    \draw[dotted,line cap=round] (d) -- (u) ;} %draw the line
    {}
    \ifthenelse{\equal{#6}{-2}} {%draw dashed line
    \draw[dashed,line cap=round] (d) -- (u);
    }{}
    }      

   \end{tikzpicture}
  }
\newcommand{\propa}[3][0]{
\begin{tikzpicture}[scale=0.6,baseline=-3]

  \ifthenelse{\NOT \equal{#2}{-1} \AND \NOT \equal{#2}{-2}}  {%draw solid line
    \draw[line cap=round] (0,-0.625) -- (0,0.625); %draw the line
    \ifthenelse{\equal{#3}{0}}{
    \node[right,inner sep=1, outer sep=1] at(0,0) {\scalebox{0.8}{$#2$}};}
    {\node[left,inner sep=1, outer sep=1] at(0,0) {\scalebox{0.8}{$#2$}};}  
    \ifthenelse{\equal{#1}{1}} {
      \draw[->,>=stealth',line width=0.01pt] (0,-0.625) -- (0,0.1);
          }{}  
    \ifthenelse{\equal{#1}{2}} {
      \draw[-<,>=stealth',line width=0.01pt] (0,-0.625) -- (0,0.1);
          }{}         
    }{
    \ifthenelse{\equal{#2}{-1}} {%draw dotted line
    \draw[dotted,line cap=round] (0,-0.625) -- (0,0.625);} %draw the line
    {}
    \ifthenelse{\equal{#2}{-2}} {%draw dashed line
    \draw[dashed,line cap=round] (0,-0.625) -- (0,0.625);
    }{}
    }      

   \end{tikzpicture}
}
\newcommand{\bubbleGraph}[6]{
\begin{tikzpicture}[scale=1,baseline=-3]
  \draw[line cap=round] (0,0.25) arc (90:270:0.15 and 0.25) ; %draw left arc
  \draw [-<,>=stealth', line width=0.01pt] (0,0.25) arc (90:190:0.15 and 0.25); 
  \node[left, inner sep=2,outer sep=3] at(-0.05,0) {\scalebox{0.7}{$#3$}};

  \draw[line cap=round] (0,-0.25) arc (-90:90:0.15 and 0.25); %draw right arc
  \draw [->,>=stealth', line width=0.01pt]  (0,-0.25) arc (-90:20:0.15 and 0.25); 
  \node[right] at(0.05,0) {\scalebox{0.7}{$#4$}};

  \draw[line cap=round] (0,-0.625) -- (0,-0.25); %draw lower line
  \draw [->,>=stealth', line width=0.01pt] (0,-0.625) -- (0,-0.45) node[ left, inner sep=0,outer sep=2] {\scalebox{0.7}{$#1$}};

  \draw[line cap=round] (0,0.25) -- (0,0.625); %draw upper line
  \draw [->,>=stealth', line width=0.01pt] (0,0.25) -- (0,0.5) node[left, inner sep=0,outer sep=2] {\scalebox{0.7}{$#2$}};

  \node[below] at(0,0.3) {\scalebox{0.6}{$#5$}};
  \node[above] at(0,-0.35) {\scalebox{0.6}{$#6$}};
  \end{tikzpicture}
  }
\newcommand{\Rxoss}[4][0]{
\begin{tikzpicture}[scale=0.6,baseline]
 \coordinate (ll) at (-0.5, -0.6);
 \coordinate (lr) at (0.5, -0.6);
 \coordinate (ul) at (-0.5,0.6);
 \coordinate (ur) at (0.5,0.6);
 \coordinate (c) at (0,0);
  
  \ifthenelse{\equal{#1}{0}}{% Draw R
   \ifthenelse{\NOT \equal{#4}{-1} \AND \NOT \equal{#4}{-2}}{
    \draw[line cap=round] (lr) -- (ul); %draw solid line
    \node[right] at ($ (c) ! 0.5 ! (lr) $) {\scalebox{0.5}{$#4$}};
    \ifthenelse{\equal{#2}{1}}{%draw arrow
      \draw[->,>=stealth',line width=0.01pt] (c) -- ($ (c) ! 0.7 ! (ul) $);
     }{}
    }{
    \ifthenelse{\equal{#4}{-1}}{
     \draw[line cap=round,dotted] (lr) -- (ul);
     }{}
    \ifthenelse{\equal{#4}{-2}}{
     \draw[line cap=round,dashed] (lr) -- (ul);
     }{}     
    }
    
\path[fill=white] (c) circle (2pt); %cover the line at the center

   \ifthenelse{\NOT \equal{#3}{-1} \AND \NOT \equal{#3}{-2}}{
   \draw[line cap=round] (ll) -- (ur);     
   \node[left] at ($ (c) ! 0.5 ! (ll) $) {\scalebox{0.5}{$#3$}};
   \ifthenelse{\equal{#2}{1}}{%draw arrow
      \draw[->,>=stealth',,fill=black,line width=0.01pt] (c) -- ($ (c) ! 0.7 ! (ur) $);
     }{}
    }{
    \ifthenelse{\equal{#3}{-1}}{
     \draw[line cap=round,dotted] (ll) -- (ur);
     }{}
    \ifthenelse{\equal{#3}{-2}}{
     \draw[line cap=round,dashed] (ll) -- (ur);
     }{}     
    }
  }  
  {% else, draw R^-1
   \ifthenelse{\NOT \equal{#3}{-1} \AND \NOT \equal{#3}{-2}}{
   \draw[line cap=round] (ll) -- (ur);     
   \node[left] at ($ (c) ! 0.5 ! (ll) $) {\scalebox{0.5}{$#3$}};
   \ifthenelse{\equal{#2}{1}}{%draw arrow
      \draw[->,>=stealth',line width=0.01pt] (c) -- ($ (c) ! 0.7 ! (ur) $);
     }{}
    }{
    \ifthenelse{\equal{#3}{-1}}{
     \draw[line cap=round,dotted] (ll) -- (ur);
     }{}
    \ifthenelse{\equal{#3}{-2}}{
     \draw[line cap=round,dashed] (ll) -- (ur);
     }{}     
    }  

\path[fill=white] (c) circle (2pt); %cover the line at the center
  
   \ifthenelse{\NOT \equal{#4}{-1} \AND \NOT \equal{#4}{-2}}{
    \draw[line cap=round] (lr) -- (ul);
    \node[right] at ($ (c) ! 0.5 ! (lr) $) {\scalebox{0.5}{$#4$}};
    \ifthenelse{\equal{#2}{1}}{%draw arrow
      \draw[->,>=stealth',fill=black,line width=0.01pt] (c) -- ($ (c) ! 0.7 ! (ul) $);
     }{}
    }{
    \ifthenelse{\equal{#4}{-1}}{
     \draw[line cap=round,dotted] (lr) -- (ul);
     }{}
    \ifthenelse{\equal{#4}{-2}}{
     \draw[line cap=round,dashed] (lr) -- (ul));
     }{}     
    }
  }
 
\end{tikzpicture}
}
\newcommand{\Rxosses}[5][0]{
 \begin{tikzpicture}
  \ifnum #5=1 {\node {\Rxoss[#1]{#2}{#3}{#4}};}
  \else {
   \foreach \x in {1,...,#5} {
    \ifnum \x=1 {
     \ifthenelse{\equal{#3}{-1} \OR \equal{#3}{-2}}{
      \ifthenelse{\equal{#4}{-1} \OR \equal{#4}{-2}}{
      \node[inner sep=-0.2pt,outer sep=0pt] {\Rxoss[#1]{#2}{#3}{#4}};
     }
     {\node[inner sep=-0.2pt,outer sep=0pt] {\Rxoss[#1]{#2}{#3}{}};}
     }
     {\ifthenelse{\equal{#4}{-1}\OR\equal{#4}{-2}}{
           \node[inner sep=-0.2pt,outer sep=0pt] {\Rxoss[#1]{#2}{}{#4}};
          }
          {\node[inner sep=-0.2pt,outer sep=0pt] {\Rxoss[#1]{#2}{}{}};}}
    }
    \else {
     \ifthenelse{\equal{#3}{-1} \OR \equal{#3}{-2}}{
      \ifthenelse{\equal{#4}{-1} \OR \equal{#4}{-2}}{
       \ifodd\x {
        \node[inner sep=-0.2pt,outer sep=0pt,anchor=south west] at (current bounding box.north west) {\Rxoss[#1]{#2}{#3}{#4}};}
       \else{
        \node[inner sep=-0.2pt,outer sep=0pt,anchor=south west] at (current bounding box.north west) {\Rxoss[#1]{#2}{#4}{#3}};
       }
       \fi
     }
     {
      \ifodd\x {
        \node[inner sep=-0.2pt,outer sep=0pt,anchor=south west] at (current bounding box.north west) {\Rxoss[#1]{#2}{#3}{}};}
       \else{
        \node[inner sep=-0.2pt,outer sep=0pt,anchor=south west] at (current bounding box.north west) {\Rxoss[#1]{#2}{}{#3}};
       }
       \fi
     }
     }
     {\ifthenelse{\equal{#4}{-1} \OR \equal{#4}{-2}}{
      \ifodd\x {
        \node[inner sep=-0.2pt,outer sep=0pt,anchor=south west] at (current bounding box.north west) {\Rxoss[#1]{#2}{}{#4}};}
       \else{
        \node[inner sep=-0.2pt,outer sep=0pt,anchor=south west] at (current bounding box.north west) {\Rxoss[#1]{#2}{#4}{}};    }
       \fi
          }{
       \node[inner sep=-0.2pt,outer sep=0pt,anchor=south west] at (current bounding box.north west) {\Rxoss[#1]{#2}{}{}};}
      } 
     \ifnum \x=#5 {              
      \ifthenelse{\NOT \equal{#3}{-1} \AND \NOT \equal{#3}{-2}}{
      \node at (-0.25,-0.2) {\scalebox{0.5}{$#3$}};}{}
      \ifthenelse{\NOT \equal{#4}{-1} \AND \NOT \equal{#4}{-2}}{
      \node at (0.4,-0.2) {\scalebox{0.5}{$#4$}};}{}
      }\fi
    }
    \fi
   } 
  }   
  \fi
 \end{tikzpicture}
}
\newcommand{\Yupst}[1]{% for 3-vertex consistency, dashed from t to s
\begin{tikzpicture}[scale=0.5,baseline={(current bounding box.center)}]
    \draw[line cap=round]   (0,0) -- (150:1) -- +(0,0.75) ;
    \draw[->,>=stealth',line width=0.01pt]  (0,0) -- (150:0.5) ;    
    \draw[line cap=round]  (0,0) -- (30:1) -- +(0,0.75) ;
    \draw[->,>=stealth',line width=0.01pt]  (0,0) -- (30:0.5) ;
    \draw[line cap=round] (0,0) -- (0,-1);
    \draw[->,>=stealth',line width=0.01pt]  (0,-1) --  (0,-0.4);  
    \draw[line cap=round,dashed] (0,-0.7) -- +(30:1) -- (30:1);
    \node[above] at(150:0.5) {\scalebox{0.7}{$r$}};
    \node[above] at(30:0.5) {\scalebox{0.7}{$s$}};  
    \node[left] at(0,-0.5) {\scalebox{0.7}{$t$}};   
    \node[above] at(0,-0.1) {\scalebox{0.7}{$#1$}}; 
\end{tikzpicture}
}
\newcommand{\Yuprt}[1]{% for 3-vertex consistency, dashed from t to r
\begin{tikzpicture}[scale=0.5,baseline={(current bounding box.center)}]
    \draw[line cap=round]   (0,0) -- (150:1) -- +(0,0.75) ;
    \draw[->,>=stealth',line width=0.01pt]  (0,0) -- (150:0.5) ;    
    \draw[line cap=round]  (0,0) -- (30:1) -- +(0,0.75) ;
    \draw[->,>=stealth',line width=0.01pt]  (0,0) -- (30:0.5) ;
    \draw[line cap=round] (0,0) -- (0,-1);
    \draw[->,>=stealth',line width=0.01pt]  (0,-1) --  (0,-0.4);  
    \draw[line cap=round,dashed] (0,-0.7) -- +(150:1) -- (150:1);
    \node[above] at(150:0.5) {\scalebox{0.7}{$r$}};
    \node[above] at(30:0.5) {\scalebox{0.7}{$s$}};  
    \node[right] at(0,-0.5) {\scalebox{0.7}{$t$}}; 
    \node[above] at(0,-0.1) {\scalebox{0.7}{$#1$}};        
\end{tikzpicture}
}
\newcommand{\Yupsr}[1]{% for 3-vertex consistency, dashed from s to r
\begin{tikzpicture}[scale=0.5,baseline={(current bounding box.center)}]
    \draw[line cap=round]   (0,0) -- (150:1) -- +(0,0.75);
    \draw[->,>=stealth',line width=0.01pt]  (0,0) -- (150:0.5) ;    
    \draw[line cap=round]  (0,0) -- (30:1) -- +(0,0.75);
    \draw[->,>=stealth',line width=0.01pt]  (0,0) -- (30:0.5) ;
    \draw[line cap=round] (0,0) -- (0,-1);
    \draw[->,>=stealth',line width=0.01pt]  (0,-1) --  (0,-0.4);  
    \draw[line cap=round,dashed] (30:1) -- +(165:1.78);
    \node[below] at(150:0.5) {\scalebox{0.7}{$r$}};
    \node[below] at(30:0.5) {\scalebox{0.7}{$s$}};  
    \node[right] at(0,-0.6) {\scalebox{0.7}{$t$}}; 
    \node[above] at(0,-0.1) {\scalebox{0.7}{$#1$}};        
\end{tikzpicture}
}
\newcommand{\VRxoss}[6][0]{
\begin{tikzpicture}[scale=0.8,baseline]
 \coordinate (c) at (0,-1);
 \coordinate (a) at (0.7085,0.7085);
 \coordinate (b) at (-0.7085,0.7085);
 \coordinate (v) at (0,-0.25); %vertex
 \coordinate (x) at (0,0.25);  %crossing
 \coordinate (ctra) at (-0.25,-0.25);
 \coordinate (ctraa) at (-0.35,0.2);
 \coordinate (ctrb) at (0.25,-0.25);
 \coordinate (ctrbb) at (0.35,0.2);
 
  \ifthenelse{\equal{#1}{0}}{% Draw R
   \ifthenelse{\NOT \equal{#3}{-1} \AND \NOT \equal{#3}{-2}}{
   \draw[line cap=round] (v) .. controls (ctrb) and (ctrbb) .. (b);     
   \node[left,inner sep=0] at ($ (ctrb) ! 0.75 ! (b) $) {\scalebox{0.7}{$#3$}};
   \ifthenelse{\equal{#2}{1}}{%draw arrow
      \draw[line cap=round,line width=0.01pt, arrow data={0.8}{stealth'}] (v) .. controls (ctrb) and (ctrbb) .. (b);
     }{}
    }{
    \ifthenelse{\equal{#3}{-1}}{
     \draw[line cap=round,dotted] (v) .. controls (ctrb) and (ctrbb) .. (b); 
     }{}
    \ifthenelse{\equal{#3}{-2}}{
     \draw[line cap=round,dashed] (v) .. controls (ctrb) and (ctrbb) .. (b); 
     }{}     
    }

\path[fill=white] (x) circle (2pt); %cover the line at the center
  
   \ifthenelse{\NOT \equal{#4}{-1} \AND \NOT \equal{#4}{-2}}{
    \draw[line cap=round] (v) .. controls (ctra) and (ctraa) .. (a); %draw solid curve
    \node[right,inner sep=0] at ($ (ctra) ! 0.75 ! (a) $) {\scalebox{0.7}{$#4$}};
    \ifthenelse{\equal{#2}{1}}{%draw arrow
      \begin{scope}[fill=black] %force the arrow to be filled black
      \draw[line cap=round,line width=0.01pt, arrow data={0.8}{stealth'}] (v) .. controls (ctra) and (ctraa) .. (a); \end{scope}
     }{}
    }{
    \ifthenelse{\equal{#4}{-1}}{
     \draw[line cap=round,dotted] (v) .. controls (ctra) and (ctraa) .. (a);
     }{}
    \ifthenelse{\equal{#4}{-2}}{
     \draw[line cap=round,dashed] (v) .. controls (ctra) and (ctraa) .. (a);
     }{}     
    }
  }  
  {% else, draw R^-1
   \ifthenelse{\NOT \equal{#4}{-1} \AND \NOT \equal{#4}{-2}}{
    \draw[line cap=round] (v) .. controls (ctra) and (ctraa) .. (a); %draw solid curve
    \node[right,inner sep=0] at ($ (ctra) ! 0.75 ! (a) $) {\scalebox{0.7}{$#4$}};
    \ifthenelse{\equal{#2}{1}}{%draw arrow
      \draw[line cap=round,line width=0.01pt, arrow data={0.8}{stealth'}] (v) .. controls (ctra) and (ctraa) .. (a);
     }{}
    }{
    \ifthenelse{\equal{#4}{-1}}{
     \draw[line cap=round,dotted] (v) .. controls (ctra) and (ctraa) .. (a);
     }{}
    \ifthenelse{\equal{#4}{-2}}{
     \draw[line cap=round,dashed] (v) .. controls (ctra) and (ctraa) .. (a);
     }{}     
    }
    
\path[fill=white] (x) circle (2pt); %cover the line at the center

   \ifthenelse{\NOT \equal{#3}{-1} \AND \NOT \equal{#3}{-2}}{
   \draw[line cap=round] (v) .. controls (ctrb) and (ctrbb) .. (b);     
   \node[left, inner sep=0] at ($ (ctrb) ! 0.75 ! (b) $) {\scalebox{0.7}{$#3$}};
   \ifthenelse{\equal{#2}{1}}{%draw arrow
      \begin{scope}[fill=black] %force the arrow to be filled black
      \draw[line cap=round,line width=0.01pt, arrow data={0.8}{stealth'}] (v) .. controls (ctrb) and (ctrbb) .. (b);\end{scope}
     }{}
    }{
    \ifthenelse{\equal{#3}{-1}}{
     \draw[line cap=round,dotted] (v) .. controls (ctrb) and (ctrbb) .. (b); 
     }{}
    \ifthenelse{\equal{#3}{-2}}{
     \draw[line cap=round,dashed] (v) .. controls (ctrb) and (ctrbb) .. (b); 
     }{}     
    }
  }   
  
\ifthenelse{\NOT \equal{#5}{-1} \AND \NOT \equal{#5}{-2}}{
\draw[line cap=round] (c) -- (v);
\node[right,inner sep=2] at ($ (c) ! 0.5 ! (v) $) {\scalebox{0.7}{$#5$}};
\ifthenelse{\equal{#2}{1}}{%draw arrow
   \begin{scope}[fill=black]
   \draw[line cap=round,line width=0.01pt, arrow data={0.9}{stealth'}] (c) -- ($ (c) ! 0.6 ! (v) $); \end{scope}
     }{}
}{
    \ifthenelse{\equal{#5}{-1}}{
     \draw[line cap=round,dotted] (c) -- (v);
     }{}
    \ifthenelse{\equal{#5}{-2}}{
     \draw[line cap=round,dashed] (c) -- (v);
     }{}     
}
 
\node[inner sep=2.5,right] at (v) {\scalebox{0.7}{$#6$}};
\end{tikzpicture}
}
\begin{document}

\title{A Unified Framework of Topological Phases with Symmetry}

\author{Yuxiang Gu}
\email{yuxianggu@gmail.com}
\noaffiliation
\author{Ling-Yan Hung}
\email{jhung@perimeterinstitute.ca}
\affiliation{Department of Physics, Harvard University, Cambridge MA 02138, USA}
\author{Yidun Wan}
\email{ywan@perimeterinstitute.ca}
\affiliation{Perimeter Institute for Theoretical Physics, Waterloo, ON N2L 2Y5, Canada}

\date{\today}

\begin{abstract}
In topological phases in $2+1$ dimensions, anyons fall into representations of quantum group symmetries. As proposed in our work (arXiv:1308.4673), physics of a symmetry enriched phase can be extracted by the Mathematics of (hidden) quantum group symmetry breaking of a ``parent phase''. This offers a unified framework and classification of the symmetry enriched (topological) phases, including symmetry protected trivial phases as well. In this paper, we extend our investigation to the case where the ``parent" phases are non-Abelian topological phases. We show explicitly how one can obtain the topological data and symmetry transformations of the symmetry enriched phases from that of the ``parent" non-Abelian phase. Two examples are computed: (1) the \II phase breaks into the $\Z_2$ toric code with $\Z_2$ global symmetry; (2) the $SU(2)_8$ phase breaks into the chiral Fibonacci $\x$ Fibonacci phase with a $\Z_2$ symmetry, a first non-Abelian example of symmetry enriched topological phase beyond the gauge theory construction. \end{abstract}
\pacs{11.15.-q, 71.10.-w, 05.30.Pr, 71.10.Hf, 02.10.Kn, 02.20.Uw}
\maketitle
%\tableofcontents
%\makeatletter
%\let\toc@pre\relax
%\let\toc@post\relax
%\makeatother

\section{Introduction}\label{sec:intro}
Recently, there is tremendous progress \cite{Chen2011f,Senthil2012a,Vishwanath2012,Essin2012,Mesaros2011,
Hung2012a,Wang2012,Lu2013,Levin,HungWan2013a,Metlitski2013,Ye2013,Sule2013,Xu2013,Huang2013,Essin2014,Nussinov2009} in understanding gapped phases with symmetries, namely the symmetry enriched phases (SEPs), including symmetry protected trivial (SPT) and symmetry enriched topological (SET) phases. Understanding these phases is a fundamental question in condensed matter physics which also promises potential applications, such as in topological quantum computation. One central question is the classification of possible phases allowed in principle in nature. In the case of SPT phases, the question has been very much answered in the seminal paper \cite{Chen2011f} in arbitrary dimensions, at least for symmetries not involving time reversal using the machinery of group cohomology theory. The same cannot be said about SET's, where a systematic classification is as yet lacking. One important development is the observation of a connection between SPT phases and Dijkgraaf-Witten (DW) lattice gauge theories, which are classified by the same group cohomology classes\cite{Levin2012,Hung2012}. By gauging the global symmetry of an SPT, one obtains a corresponding DW lattice gauge theory, which is a topological theory characterizing a topological phase. Such a connection has inspired the twisted quantum double (TQD) model of topological phases\cite{Hu2012a} and a partial classification of SET's\cite{Mesaros2011,Hung2012a}, which however remains incomplete, particularly when the topological phases involved before and after gauging are beyond the realms of group cohomology.   
To tackle the question in 2+1 dimensions, we propose a generalized framework\cite{Hung2013} that encodes all the topological data of SEPs in intrinsic topological orders describable by tensor categories, beyond those connected to DW theories.

Let us summarize our main results below, which will be followed by more detailed explanation.
\begin{itemize} 
\item We unify SPT and SET phases by embedding them in phases with intrinsic topological order. SEPs are then classified by topological phases and the embedding structures. Our results are also closely related to properties of defects in a phase.
\item We can readily predict allowed fractional global charges in different sectors in a topological phase with symmetry.
\item We explicity show how the doubled Ising model gives rise to the $\Z_2$ spin liquid with a $\Z_2$ symmetry that induces the electric-magnetic duality.
\item For the first time, we construct the chiral Fibonnaci $\times$ Fibonnaci phase with a particular $\Z_2$ symmetry. 
\end{itemize}

The Physics of SEPs is recovered from the parent intrinsic topological order via the idea of \emph{anyon condensation}, which breaks a quantum group symmetry\cite{Bais2002,Bais2003}, the hidden symmetry of a general topological phase. We note that however, no physical condensation actually occurs in our consideration. This picture reproduces all the 2+1 $d$ SET phases thus far constructed via the projective symmetry group (PSG), and opens up an  avenue to a vast number of phases hitherto unknown. It is tempting to conjecture that this is the complete framework that describes all SET phases in 2+1 $d$, which is the crucial element towards an ultimate classification. To better convey the idea just described in this paragraph, let us first briefly introduce the concepts of SEPs and hidden quantum group symmetry.

Global symmetry charges of excitations in SEP often fractionalize. Global symmetry might also
act by mapping one topological sector to the other, e.g., symmetry between charge and flux. The global symmetry must not act arbitrarily on the topological sectors, as pointed out in Ref.\cite{Essin2012,HungWan2013a}; rather, its action has to be compatible with the fusion rules of the topological sectors, in the sense that the global symmetry must transform physical sectors---bosons---linearly, in particular, its identity element ought to transform the bosons trivially; however,  since the fusion rules dictate which and how topological sectors fuse to bosons, this compatibility condition leaves space for the identity to act on unphysical sectors non trivially as long as it preserves the bosons, endowing the unphysical topological sectors fractionalized charges associated with the global symmetry. The nontrivial actions of the global symmetry other than that of the identity element may be able to exchange some topological sectors, leading to the charge-flux symmetry transformation mentioned above, and even more exotic behavior of the topological sectors\cite{Lu2013,HungWan2013a}.

In gauge theories, particles are characterized by their gauge charges that are labeled by the irreducible representations of the relevant gauge group. Although topological phases are usually described by effective (continuous or discrete) gauge theories with a gauge group $G,$ one soon finds that the  $G$ itself is not able to fully specify the topological sectors or anyons in the topological phase. For example, in certain simple Abelian topological phases, a generic anyon carries not only a charge quanta but also a $G$-valued flux quanta that is described by a non-gauge symmetry dual to the gauge symmetry $G$. It turns out that these anyons are representations of the quantum double $D[G]=F[G]\otimes\mathbb{C}[G]$ of $G$, where $F[G]$ is the space of functions over $G$ and $\mathbb{C}[G]$ the group algebra of $G$. $D[G]$ is a special and simple case of what is known as a quantum group, which is in fact an algebra rather than a group. More generally, as in the deconfined and low-energy limit, local interactions of the anyons in a topological phase comprise an algebra---a fusion algebra---that is the fusion ring of the representations of a quantum group, e.g., as to be seen later in the paper, the deformed universal enveloping algebra of a Lie algebra, while the long range or non-local interactions are of the Aharonov-Bohm type. One then often recognizes the relevant quantum group as a \textit{hidden quantum group symmetry} of the topologically ordered system, as a generalization of gauge symmetry, which is also hidden but instead described by a group.

In section \ref{sec:genScheme}, we review our framework of SEP construction via quantum group symmetry breaking first introduced in our previous work\cite{Hung2013}. We will then describe how the framework of anyon condensation developed in Ref.\cite{Eliens2013} can be adopted for our purpose, 
giving a precise prescription of how topological data is retrieved from the parent phase  involving more general non-Abelian topological orders. In particular, we will describe braiding properties between ``unconfined'' and ``confined'' anyons in the broken phase and how they store global symmetry transformation properties of the SEP. 
In Sections \ref{sec:Ising2break} and \ref{sec:su2_8break}, we will demonstrate our ideas in concrete examples. We will begin with a warm up example where the \II theory is considered, which encodes the $\Z_2$ symmetric toric code model\cite{Lu2013,HungWan2013a}. This is an extension of the discussion in Ref.\cite{Hung2013} which we now reformulate in a more systematic footing. In section \ref{sec:su2_8break} we will discuss a more interesting example, in which we demonstrate how a $\Z_2$ symmetric Fibonacci $\times$ Fibonacci theory is embedded in a $SU(2)_8$ topological order. To our knowledge , this is the first example of a non-Abelian topological order enriched by a global symmetry beyond the DW lattice gauge theory construction. 

\section{The general scheme of quantum group symmetry breaking}\label{sec:genScheme}
In this section, we elaborate on the general scheme of quantum group symmetry breaking and in particular how a global symmetry is generated on the residual phase via this breaking. 

To understand quantum group breaking, we begin with a special case---the PSG. Consider an SET phase $S$, a deconfined gauge theory with gauge group  $N_g$ and a global symmetry $G_s$; for later convenience, we call the topological order described by the gauge group $N_g$ the {\bf bare phase} of $S$. The phase $S$ is specified by assigning compatible $G_s$ charges, potentially fractionalized, to the $N_g$ gauge charges (chargeons), fluxes (fluxons) and dyons, which are generally anyons. A PSG group $G$ encodes such an assignment, such that $N_g$ is its normal subgroup, and that $G_s = G/N_g$.  The embedding of $N_g$ in $G$ determines the non-trivial $G_s$ representations the anyons fall into.

An alternative way exists, however, to encode the quotient group structure, by choosing a ``charge condensate'' $\vac$ such that $N_g$ is the invariant subgroup of $G$ that keeps $\vac$ invariant. This is familiar in the Higgs mechanism, where the quotient $G_s= G/N_g$ describes the moduli space, and in the PSG construction $G_s$ is interpreted as the global symmetry. The condensate $\vac$ by definition transforms under $G_s$ and thus carries a $G_s$ charge, but it carries no $N_g$ charge and thus automatically belongs to the topologically trivial sector of the $N_g$ gauge theory. To emphasize the distinction from a physical condensate, let us call $\vac$ the {\bf charged vacuum}.

In the context of a topological order, the chargeons and fluxons of the deconfined gauge theory are on an equal footing. The ``anyon condensate'' perspective of the PSG construction therefore immediately suggests a generalization by considering more general $\vac$.

We thus need to go beyond gauge symmetry breaking. It turns out that quantum group symmetry breaking, which naturally treats chargeons, fluxons and dyons equally, can describe general anyon condensates. In fact, for a general topological phase not necessarily related to a gauge theory, each type of anyon can be understood as the representation of some hidden Hopf symmetry algebra associated to the phase.  Hopf symmetry breaking has been discussed in the context of anyon condensation\cite{Bais2002,Bais2003,Bais2009,Bais2009a,Bais2012,You2013}, where a macroscopic condensate of anyons appears, driving a phase transition that takes one topological phase described by a Hopf algebra/quantum group $\mathcal{A}$ to another phase described by $\mathcal{U}$. Here, the \emph{same} Mathematics encodes the physics of SEPs. A topological phase with a quantum group symmetry $\A$, which plays the role of the PSG, encodes an SEP phase whose bare phase has a smaller quantum group symmetry $\mathcal{U}$. Let us summarize how this is to be done in the following three steps. 

\subsection{Key steps}
\textbf{Step 1: The charged vacuum $\vac$}.
As discussed above, we first choose a $\vac$ in the representations of $\mathcal{A}$. In the PSG construction $\vac$ is just a state in a representation of the PSG $G$. The topological order in the SEP $S$ is then related to the subalgebra of $\A$, denoted $\T$, that keeps $\vac$ ``\emph{invariant}'', a subtle concept that has to be defined carefully in the context of an algebra,  and such an approach has indeed been taken in Ref\cite{Bais2003}.  This approach is well defined in Hopf algebras such as the twisted quantum double (TQD)\cite{Hu2012a,Mesaros2011}, where the internal states of all representations are physical and appear in the physical Hilbert space. In these cases however, the quantum dimension of the anyon (representation) containing the "charged vacuum" is necessarily an integer. All Abelian topological phases can be described by TQDs,  and we have studied the SEPs due to the breaking of Abelian topological phases in Ref.\cite{Hung2013}.
 
More generally, anyons have fractional or even irrational quantum dimensions, e.g., the anyons described by quantum groups $U_q(sl_2)$ for $q$ a root of unity; nonetheless, it is still possible to define $\vac$ as hidden in a representation that labels an anyon $c$ of $\A$\cite{Bais2009a}, although one does not solve for the subalgebra $\T$ explicitly. The allowed quantum dimension for $\vac$ must be one, as expected of a physical excitation, and the \emph{topological spin} $h_{\vac}$ of $\vac$ should be integers, such that the self statistics $\theta_{\vac}=\exp{\ii 2\pi h_{\vac}}$ of $\vac$ should be unity. There are two possible types of condensates. The first type is \textit{simple current} condensate, in the sense that the anyon $\gamma$ containing the charged vacuum has $d_\gamma=1$ and surely $h_\gamma\in\Z$. In this case, we can identify our notion of $\vac$ with $c$ for simplicity. A subtlety exists however, that is, if $\gamma\x \gamma=\gamma'$, an anyon distinct from $\gamma$, $\gamma'$ must also condense if $\gamma$ condenses. In other words, more than one topological sectors of $\A$ may actually condense. In this paper, we shall restrict our investigation to the cases where only one sector $\gamma$ may condense, implying that $\gamma\x \gamma=1$, the trivial sector. But many results to be shown apply beyond this restriction.

Anyon condensate in $\A$ can make certain sectors of $\A$ break into irreducible representations of a subalgebra $\T\subset A$. Namely, 
\be \label{eq:restrict}
a \rightarrow \sum\nolimits_i n^{a}_i  a_i,
\ee
where $n^{a}_i$ is the multiplicity of the representation $a_i$ of $\mathcal{T}$ in the decomposition. Note that the decomposition conserves the quantum dimension, $d_{a} = \sum\nolimits_j n^{a}_i d_{a_i}$. In particular for a simple current condensate $\gamma$, a topological sector of $\A$ splits as in Eq. \eqref{eq:restrict} if and only if $a \x \gamma=a$ and $d_a>1$. One may understand this feature in three ways. First, the condensate $\gamma$ renders $a\x \gamma$ and $a$ with different global charges in the broken theory $\T$; hence, $a$ must split into two $\T$ sectors that carry the two different global charges. In fact, if there are more than one condensates and some $\A$ sector $a$ is invariant under fusion with all these condensates, $a$ must split into multiple $\T$ sectors. Second, from the unitary tensor category point of view, $a\x \gamma=a$ implies that there exists a nontrivial isomorphism on the object $a$, which further implies that when the simple object $\gamma$ condenses and is hence removed from the category $\A$, $a$ should be a nonsimple object of the resultant category $\T$, thus $a$ should take the form in Eq. \eqref{eq:restrict}. Third, since $\gamma\x \gamma=1$ and $\gamma\x a=a$, $a\x \bar{a}$ must contain two copies of the vacuum or trivial sector $1$, had $a$ not split into two sectors of $\T$, $a$ and its anti-anyon $\bar a$ would have two different ways of annihilating into the vacuum, which is forbidden by unitarity. Hence, $a$ has to split. 

The second type of anyon condensate is nonsimple current condensate, which is a partial condensate in the following sense. A \textit{nonsimple} current is a sector $\gamma$ with $d_\gamma>1$ and the fusion product of $\gamma$ with some sectors $a$ produces more than one sectors. According to Ref. \cite{Bais2009a}, if we enforce a nonsimple current $\gamma$ with $h_\gamma\in\Z$ to condense, $\gamma$ will not condense completely because $d_\gamma>1$; rather, it will split as $\gamma\rightarrow n^\gamma_11+\sum_{i>1}n^\gamma_i \gamma_i$, where $n^\gamma_1\geq 1$, into $n^\gamma_1$ copies of the vacuum of the condensed phase and other sectors depending on $d_\gamma$. In other words, the charged vacuum $\vac$ can be identified with one or more of those states that are indistinguishable from the vacuum of the condensed phase, as if $\gamma$ has an internal Hilbert space of dimension $d_\gamma$. Because the nonsimple current condensates are rather complicated and there lacks a general rule determining how topological sectors of $\A$ should split, we shall restrict our discussion hereafter to simple current condensates only. 

For similar reasons, we also restrict to the cases where $\A$'s fusion algebra is multiplicity free, namely, $N^c_{ab}= 1$ if $a$ and $b$ can fuse to $c$, otherwise $N^c_{ab}=0$.

Prior discussions of SEP phases describable by gauge theories enriched by global symmetries\cite{Hung2012a} call $\vac$  a ``pure ($G_s$) charge', which has bosonic self statistics and trivial mutual statistics with all other electrically gauge charged particles,  in accord with our discussion.

\textbf{Step 2: Identifying the ``bare phase'' and the confined sector}. One might expect the ``bare phase'' to be described by the ``invariant subalgebra'' $\T$ above, but this is not true\footnote{We assume the quasi-triangular Hopf algebra $\A$ is modular, then there is always a confined sector in $\T$, such that $\U$ is smaller than $\T$}. According to the literature on anyon condensation, an anyon condensate always divides the anyons into two distinct sets: the confined and unconfined sectors. The former consists of all the anyons that are \emph{non-local} with respect to the condensate, whereas the latter consists of those \textit{local} with respect to the condensate. Two excitations are mutually local if they have trivial mutual statistics. An anyon mutually non-local with the condensate would pull a string in the medium, and its creation or isolation is thus energetically expensive, i.e., they are indeed \emph{confined}. This concept is again applicable here in determining the ``bare phase''.  Recall in Step 1 that $\vac$ is taken as the charged vacuum; therefore, barring the $G_s$ charge it carries, as far as the ``bare phase'' is concerned, it behaves exactly as the trivial sector, which is necessarily mutually local with \emph{all} other anyons of the ``bare phase''. As such, the anyons in the confined sector cannot be part of the ``bare phase''.  Discussed in Ref\cite{Lu2013,Hung2012a,Wen2013a}, confined particles are precisely those identified as the ``twist'' particles.  As a result, quite generally, the  Hopf algebra $\U$ that characterizes the ``bare phase'' differ from the ``invariant subalgebra'' $\T$ introduced above, as some of $\T$'s anyon representations are excluded from $\U$.

To recover $\U$, we should first isolate the confined sector. For this we need the fusion algebra of the decomposed anyons $a_i$ and $b_j$, which we can deduce from two facts: 1) the fusion rules commute with (\ref{eq:restrict}), namely
\be \label{eq:commute}
a\otimes b = c \to (\sum\nolimits_{i,j} n^a_i n^b_j a_i \otimes b_j )= \sum\nolimits_k n^c_k c_k ,
\ee
which implies the conservation of quantum dimension, and 2) the fusion algebra is associative. Consider in particular the fusion of the decomposed pieces $a_i$ and $\vac$. Since $\vac$ is topologically trivial in $\U$,  any two anyons belong to the same topological sector in $\U$ if they are related by fusion with $\vac$. But if two supposedly identified anyons $a_i$ and $b_j$ descend from anyons $a$ and $b$ in $\A$ with $\theta_{a} \neq \theta_{b}$, then as anyons in $\U,$ they cease to have well defined topological spins; hence, they must be confined. According to Ref\cite{Bais2009a}, this is the necessary and sufficient condition for determining the confined anyons.

With all confined anyons excluded and all anyons related by $\vac$ identified, what remains is the unconfined sector consisting of anyons with well defined topological spins and fusion rules between them, leading to a unique $\U$. Clearly to describe SET phases, the resultant $\U$ is non-trivial with multiple topological sectors, whereas a trivial $\U$ encodes the physics of SPT phases. We note that in practice there are further consistency conditions that has to be taken into account to obtain the complete fusion algebra, as detailed in Ref.\cite{Bais2009a}.

Below we note down some important general results for future convenience in the paper. We shall first prove these results Mathematically, and postpone a physical motivation after we explain Step 3. 
\begin{proposition}\label{prop:split}
If a $\T$ sector $t$ has only one lift $a$, there must exist at least another $T$ sector $t'$ that has the same lift. In other words, $t$ and $t'$ are both restrictions of the lift $a$.
\end{proposition}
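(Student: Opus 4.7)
The plan is a short case analysis based on how the condensate $\gamma$ acts on the unique lift $a$. The two key inputs, both supplied earlier in this section, are: (i) the splitting rule---$a \x \gamma = a$ together with $d_a>1$ is the precise criterion for $a$ to split into several $\T$-sectors under the restriction \eqref{eq:restrict}; and (ii) the fact that $\gamma$, being the charged vacuum, carries no charge under the invariant subalgebra $\T$ and therefore restricts to (one copy of) the trivial $\T$-sector.

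As a preliminary observation I would record an immediate consequence of (ii) together with the compatibility \eqref{eq:commute} of fusion and restriction: for any simple object $a$ of $\A$, the $\A$-sectors $a$ and $a':=a\x\gamma$ share the same $\T$-decomposition. Indeed, writing $a\to\sum_i n^a_i a_i$ and $\gamma\to 1$, applying \eqref{eq:commute} to $a\x\gamma$ yields $a'\to\sum_i n^a_i a_i$ as well. In particular every $\T$-sector appearing in the decomposition of $a$ also appears in that of $a'$, with the same multiplicity; equivalently, $a$ and $a'$ are lifts of exactly the same set of $\T$-sectors.

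Now let $t$ be a $\T$-sector with a unique $\A$-lift $a$, and set $a':=a\x\gamma$. If $a'\neq a$, the preliminary observation would exhibit $a'$ as a second distinct $\A$-lift of $t$, contradicting uniqueness; hence $a\x\gamma=a$. Since $\gamma\neq 1$ and $d_\gamma=1$, the equation $a\x\gamma=a$ is inconsistent with $d_a=1$ (which would force $\gamma=1$), so $d_a>1$. The splitting rule then guarantees that $a$ restricts to at least two distinct $\T$-sectors; one of them is $t$ by assumption, and any other, call it $t'$, is the desired second $\T$-sector sharing $a$ as its lift.

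The main (really the only) obstacle I anticipate is making the preliminary observation fully rigorous---that is, verifying precisely that $\gamma$ restricts to the trivial $\T$-sector and that restriction commutes with fusion in the sense used above. This is morally forced by the defining properties of the charged vacuum ($d_\gamma=1$, $h_\gamma\in\Z$, $\gamma\x\gamma=1$) together with $\T$ being the subalgebra that leaves $\gamma$ invariant, but a clean statement may require invoking the algebraic formalism of Ref.~\cite{Bais2009a}. Once this is granted, the remainder of the proof is essentially a two-line bookkeeping argument.
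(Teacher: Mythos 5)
Your proposal is correct and follows essentially the same route as the paper: the paper's proof simply asserts that uniqueness of the lift forces $a\x\gamma=a$ and then invokes the splitting rule from Step 1. You fill in the two small gaps the paper glosses over — namely, the preliminary observation that $a$ and $a\x\gamma$ restrict to the same set of $\T$-sectors (because $\gamma$ restricts to the $\T$-vacuum $\varphi$ and restriction commutes with fusion, Eq.~\eqref{eq:commute}), which is exactly why $a\x\gamma\neq a$ would contradict uniqueness of the lift; and the check that $a\x\gamma=a$ with $\gamma\neq 1$ invertible forces $d_a>1$, so both hypotheses of the splitting rule are met. These are the right justifications and they match the paper's intent.
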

The proof is simple. Consider single simple current condensate only, suppose $u$ has no other lifts but the only one $a$, then the fusion of $a$ and the condensate $\gamma$ $a\x \gamma=a$ must hold. As elaborated in Step 1, $a$ has t split into two unconfined sectors. This proof clearly generalizes to the case with multiple single current condensates.
\begin{proposition}\label{prop:equalLifts}
All $\T$ sectors that have multiple lifts should have the same number of lifts.
\end{proposition}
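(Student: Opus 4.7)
The plan is to exploit the involutive $\Z_2$-action on $\A$-anyons given by fusion with the simple-current condensate $\gamma$ (recall we restrict to $\gamma\x\gamma=1$) and to identify the set of $\A$-lifts of any fixed $\T$-sector $t$ with a single orbit of this action. Since a $\Z_2$-orbit has size $1$ or $2$, once this identification is made every $\T$-sector with more than one lift must carry exactly two, which is the claim.

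First, I would verify that $\gamma$-fusion permutes lifts. Applying Eq.~\eqref{eq:commute} to $a\otimes\gamma$ and using that $\gamma$ itself restricts to the vacuum $1_\T$, one obtains that $a\gamma$ has the same $\T$-decomposition $\sum_i n^a_i a_i$ as $a$. Consequently $a\in\mathrm{Lift}(t)$ iff $a\gamma\in\mathrm{Lift}(t)$, and $\mathrm{Lift}(t)\subset\A$ is a union of $\gamma$-orbits.

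The main obstacle is the second step: showing that $\mathrm{Lift}(t)$ is a \emph{single} $\gamma$-orbit rather than a disjoint union of several. The content of this step is to rule out ``accidental'' identifications in $\T$ that are not generated by $\gamma\sim 1$. I would establish it using the quantum-dimension book-keeping $d_a=\sum_i n^a_i d_{a_i}$ together with a Frobenius-reciprocity statement in the categorical anyon-condensation framework of Ref.~\cite{Bais2009a}: restriction has a right-adjoint induction functor whose image on any $\T$-sector $t$ is built by fusion with $\gamma$, so the lifts of $t$ are exhausted by a single $\gamma$-orbit. Tracking total quantum dimensions on both sides of the adjunction closes the argument without needing explicit knowledge of $\T$ as a Hopf subalgebra.

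With the single-orbit property established, the proposition follows immediately. If a lift $a$ has orbit of size $2$, i.e.\ $a\gamma\neq a$, then by Step~1 of Sec.~\ref{sec:genScheme} $a$ does not split and $t$ has exactly the two lifts $\{a,a\gamma\}$. If instead the orbit has size $1$, i.e.\ $a\gamma=a$, then $a$ splits into several $\T$-sectors $a_i$ each having $a$ as its unique lift---these are precisely the single-lift sectors featured in Proposition~\ref{prop:split}. Hence every $\T$-sector carries either one or two lifts, and all multi-lift sectors uniformly carry two, as claimed.
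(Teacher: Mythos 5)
Your proposal takes the same essential route as the paper --- exploiting that fusion with the $\Z_2$ simple current $\gamma$ acts on $\A$-sectors and that a $\Z_2$-orbit has size at most two --- but you are more careful about making explicit the key step that the paper leaves implicit. The paper simply asserts ``then we have either $a''=a\x\gamma$ or $a''=a'\x\gamma$'', which presupposes exactly your ``single-orbit'' claim: that any two lifts of the same $\T$-sector $t$ are related by $\gamma$-fusion. You correctly identify this as the only nontrivial content, verify the easy half (fusion with $\gamma$ preserves the restriction decomposition, so $\mathrm{Lift}(t)$ is a union of $\gamma$-orbits), and then invoke Frobenius reciprocity for the restriction/induction adjunction to argue the union is a single orbit. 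This appeal is sound in spirit --- inducing $t$ up to $\A$ gives $t\otimes(1\oplus\gamma)$, whose simple constituents are precisely the lifts of $t$, which is indeed one $\gamma$-orbit --- though you gesture at the bookkeeping rather than carry it out. In short: the two arguments are mathematically the same, but you surface and partially address a tacit assumption the paper glosses over, which is an improvement. One small point worth adding for completeness: your final sentence needs the observation that a sector with two lifts cannot simultaneously have the single lift $a$ with $a\gamma=a$ (otherwise the dichotomy wouldn't be exclusive); this is automatic once the single-orbit property holds, but it is worth stating.
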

Also consider single simple current condensate $\gamma$, suppose a $\T$ sector $t$ has three different lifts $\{a,a',a''\}$ and suppose $a'=a\x \gamma$, then we have either $a''=a\x \gamma$ or $a''=a'\x\gamma$. Since $\gamma\x \gamma$=1, the former implies $a''=a'$ while the latter implies $a''=a$. Thus, any $\T$ sector can have at most two distinct lifts, supporting the proposition. It is easy to generalize this derivation to the case with $n$ condensates, where one can show that any $\T$ sector that has more than one lifts must have exactly $n$ lifts. This completes the proof of the proposition. A natural corollary is that the number of lifts is equal to the number of the lifts of $\T$'s vacuum. 

\textbf{Step 3: Identifying the global symmetry group action and the pure braids}. The story of anyon condensation ends, but our journey to SEPs continues: we need to find out the global symmetry $G_s$ and how it transforms the anyons in $\U$. The key lies in the confined sector. Let us  propose a rule:

\centerline{\begin{boxedminipage}[c]{0.98\columnwidth}One can extract the group action of an element $g_i$ of $G_s$ by winding the physical system (a set of particles in $\U$) around a \textit{spectator}, a specific confined particle $c_i$ corresponding to $g_i$. The fusion of the spectators modulo the fusion of the unconfined sector yields the structure of $G_s$. That is, the confined sector generates the group action of the non-trivial elements of $G_s$; the unconfined particles obtainable from fusing the confined particles generate the action of the identity, and are thus related to global charge fractionalization.\end{boxedminipage}}

The gauging procedure in Ref\cite{Levin2012,Hung2012,Hung2012a,Lu2013} inspires the above proposal. When the global symmetry is gauged, the new theory has an expanded spectrum containing excitations carrying ``magnetic flux'' of the global symmetry $G_s$. In those cases studied, the Aharanov-Bohm phase acquired by a  $G_s$ charge moving about a $G_s$ flux coincides with a particular group action of $G_s$, depending on the flux particle involved.

This can be contrasted with Ref\cite{Essin2012} which gives a consistency condition of the $G_s$ action, that the identity element of $G_s$ must act on individual anyons in a way such that its aggregate action on a system of anyons fusing to a physical state is trivial. This condition strongly points to identifying confined particles as ``generators of $G_s$'', up to their fusion to unconfined particles. This also complies with our intuition from group symmetry breaking, where $G_s$ is certain quotient of $\A$ by $\U$.

The above scheme of generating the global symmetry enables one to predict which topological sectors in a bare phase can acquire fractionalized global symmetry charges, and find out all such possible fractionalizations. As shown in Ref.\cite{HungWan2013a}, global symmetry fractionalization depends on the bare phase only and is determined by the action of the identity element of the global symmetry group; therefore, in the boxed rule above, one can take an unconfined sector as the spectator, whose monodromy with the $\U$ sectors will generate the action of the identity element of the global symmetry on $\U$. As an example, in our previous work\cite{Hung2013}, we could eaisly find that for the Ising type topological phase as the bare phase, only the sector $\sigma$ may carry a $-1$ global symmetry charge. Without using our quantum group breaking approach, it is rather involved to make such predictions\cite{Yao2010}\footnote{Yong-Shi Wu, \textit{Recent Progress in 2d Exactly Solvable Discrete Models of Topological Phases}, and Hong Yao, \textit{Topological phases enriched by non-spatial and/or spatial symmetries}, 2013 IASTU Summer Forum on the Interplay of Symmetry and Topology in Condensed Matter Physics, Institute for Advanced Study, Tsinghua University, Beijing, China, July, 2013}.
\subsection{$\T$ and $\U$ from $\A$ in detail}
To derive the precise topological content of $\T$ and $\U$ phases from $\A$, we first need to establish a relation between the Hilbert space of the broken phase and that of the original $\A$ phase. In our previous work\cite{Hung2013}, we made an attempt of mapping the Hilbert space of the broken theory to that of the parent theory, with which we demonstrated how we can define the braiding and monodromy between an unconfined particle and a confined one. This mapping of the Hilbert space is defined by the vertex lifting coefficients (VLCs) devised by Elie\"ns et al \cite{Eliens2013}, which is almost identical to our attempt apart from a treatment of the confined sectors. We now briefly review this framework, with adaptations to our purposes, in particular in extracting the group action of the global symmetry on the unconfined phase $\U$. We shall adopt a celebrated diagramatics---spacetime diagrams---of anyon models, whose underlying mathematial framework is the unitary braided fusion category theory, to illustrate the key ideas and procedures. As one can find nice reviews of the diagramatics in various resources, e.g., Ref.\cite{Eliens2013, Kitaev2006}, we shall not include the rudiments here. Rather, we make a few remarks on our diagramatic convention. First of all, we let time flow up. Each edge in a diagram is an anyon propagator. An upward arrow on a propagator signifies a nontrivial anyon, whereas flipping the arrow turns the anyon to its anti-anyon. The trivial sector (vacuum) takes either a dashed or dotted line (to be clarified shortly), without an arrow. Any ``horizontal" propagator connecting two vertical propagators must be tilted in such a way that its left end is later in time than its right end. The following definition of $F$ symbols illustrates our convention.
\[
\Hgraph[1]{a}{b}{c}{d}{e}=\sum_f \Fm{a}{b}{c}{d}{e}{f}{}{}\Xgraph[1]{a}{b}{c}{d}{f}.
\]  
Moreover, we shall exclusively deal with self-dual anyon models in this paper, namely the models where an anyon is the anti-anyon of itself. As such, the arrows are herein redundant; however, for the sake of future extensions, we keep the arrows, without explicitly denoting the anti-anyons.

In our breaking scheme of quantum group symmetry, topological sectors of $\A$ are rearranged, or branch into those of $\T\supset\U$. 
%Therefore, the propagators of $\A$ sectors that branch into the same $\T$ sector simply are identified as the same propagator as viewed by an %observer in $\T$.  %%JH This statement is not really accurate.. A given propagator in A has no pull back in T, so it doesn't really mean anything in T..
A propagator of an anyon $t$ in $\T$ is thus mapped to linear combinations of  $\A$ propagators involving all the lifts of the anyon $t$, namely
\be\label{eq:propagatorLift}
\propa[1]{t}{1}=\sum_{a\in t}\propa[1]{a}{1},
\ee 
where an oriented vertical line is a propagator, and the notion $a\in t$ refers to all the lifts of $t$. A propagator is also thought of as an object in the corresponding fusion category. We denote the $\T$ vacuum by $\varphi$ and draw its propagator as a dashed line without an arrow, and we draw a dotted line for the propagator of the true vacuum in $\A$; hence, the following is understood. 
\be\label{eq:propagatorVacLift}
\propa{-2}{}\; =\;\propa[1]{\varphi}{} \;=\; \propa{-1}{}\; +\; \propa[1]{\gamma}{0},
\ee 
where $\gamma$ is the only simple current condensate in $\A$, which is the case we consider exclusively in this paper. Therefore, one expects that there exists a linear relation between the Hilbert space of $\T$ and that of $\A$. In particular, because such a Hilbert space is spanned by the states corresponding to the 3-point vertices, one asserts that
\be\label{eq:vlc}
\Yup[1]{r}{s}{t}{\mu}=\sum_{a\in r, b\in s, c\in t}\vlc{r}{s}{t}{a}{b}{c}_{\mu} \Yup[1]{a}{b}{c}{},
\ee
where the complex coefficients $\vlc{r}{s}{t}{a}{b}{c}_{\mu}$ are called the vertex lifting coefficients (VLCs), which should depend on the topological sectors  in $\T$ connected at a vertex and their lifts in $\A$. Here we remark that although we assume to restrict our discussion to $\A$ whose fusion algebra is multiplicity free, i.e., $N^c_{ab}=\delta_{abc}$, the fusion algebra of the broken phase $\T$ may have multiplicity, as to be seen in Section \ref{sec:su2_8break} where $\A=SU(2)_8$. Thus, on the LHS of Eq. \eqref{eq:vlc}, a generic vertex of $\T$ carries a multiplicity index $\mu=1,\dots, N^t_{rs}$, while on the RHS of the equation, the VLCs carry the same multiplicity index. This emergent multiplicity in $\T$'s fusion algebra is not treated in Ref.\cite{Eliens2013}, and we generalize the procedures there to accommodate these situations. Note that vertices with different multiplicity labels represent orthogonal basis states, namely
\[
\Blangle \Ydown[1]{r}{s}{t}{\mu} \Bvert \Yup[1]{r}{s}{t'}{\nu} \Brangle= \bubbleGraph{t}{t'}{r}{s}{\mu}{\nu}= \sqrt \frac{d_t}{d_rd_s}\delta_{tt'}  \delta_{\mu\nu}\propa[1]{t}{1}\; .
\]
Hence, if we think of $\vlc{r}{s}{t}{a}{b}{c}_{\mu}$ for given $r$, $s$, and $t$ as a vector whose elements are labeled by $abc$ as a single index, then the following orthogonality is obvious.
\be
\sum_{abc\in rst} \vlc{r}{s}{t}{a}{b}{c}_{\mu}\vlc{r}{s}{t}{a}{b}{c}_{\nu}^*=0, \quad \text{for }\mu\neq \nu.
\ee 
For simplicity, we adopt the convention in Ref.\cite{Eliens2013} to give the following special VLCs more convenient notations.
\be\label{eq:tdef}
t^{\gamma a}_b:=\sqrt{\kappa}\vlc{\varphi}{t}{t}{\gamma}{a}{b},\;     t^{a \gamma}_b := \sqrt{\kappa} \vlc{t}{\varphi}{t}{a}{\gamma}{b}, \; t^{ab}_\gamma:=\sqrt{\kappa} \vlc{t}{t}{\varphi}{a}{b}{\gamma}
\ee
where $\kappa=d_1+d_\gamma$ is a normalization factor. In general, $\kappa$ is the sum of the quantum dimensions of all sectors in $\A$ that are identified with the vacuum $\varphi$ in $\T$. Note that unitarity forbids multiplicity to appear in Eq. \eqref{eq:tdef}. We choose the following gauge of the braiding between a nontrivial $\T$ sector $t$ and the $\T$ vacuum\cite{Eliens2013}:
\be
\VRxoss[0]{1}{t}{-2}{t}{}\equiv \Yup[1]{t}{-2}{t}{}\Longrightarrow t^{a\gamma}_{b} \equiv R^{\gamma a}_b t^{\gamma a}_b.
\ee
Besides, the VLCs for $\T$ vertices with all three legs being the $\T$ vacuum are even more special and can be conveniently denoted as follows.
\be
\phi^i_{jk}=\sqrt\kappa \vlc{\varphi}{\varphi}{\varphi}{\gamma_i}{\gamma_j}{\gamma_k},\quad \gamma_i,\ \gamma_j,\ \gamma_k\in\{1,\gamma\},
\ee
where $\gamma\in\A$ is the only simple current condensate we consider. It turns out that there are consistency conditions that demands $\phi^i_{jk}=1$ $\forall i,j,k$ allowed by $\A$'s fusion algebra. The general VLCs can be obtained via imposing the consistency conditions \eqref{eq:ABCconsistency} of the elementary vertices, whose algebraic meaning is exhibited in Eq. \eqref{eq:ABCeqs}.
\be\label{eq:ABCconsistency}
\Yupst{\mu}= \Yup[1]{r}{s}{t}{\mu},\quad \Yuprt{\mu}= \Yup[1]{r}{s}{t}{\mu},\quad \Yupsr{\mu}= \Yup[1]{r}{s}{t}{\mu}.
\ee
\begin{subequations}\label{eq:ABCeqs}
\begin{alignat}{2}
&\sum\limits_{b',c'}A\indices*{*^{b'}_b^{c'}_c}{\vlc{r}{s}{t}{a}{b'}{c'}}_\mu &&= {\vlc{r}{s}{t}{a}{b}{c}}_\mu ,\label{eq:Aeq}\\
&\sum\limits_{a',c'}B\indices*{*^{a'}_a^{c'}_c}{\vlc{r}{s}{t}{a'}{b}{c'}}_\mu &&= {\vlc{r}{s}{t}{a}{b}{c}}_\mu,\label{eq:Beq}\\
&\sum\limits_{a',b'}C\indices*{*^{a'}_a^{b'}_b}{\vlc{r}{s}{t}{a'}{b'}{c}}_\mu &&= {\vlc{r}{s}{t}{a}{b}{c}}_\mu,\label{eq:Ceq}
\end{alignat}
\end{subequations}
where
\begin{subequations}\label{eq:ABCmatrices}
\begin{alignat}{2}
& A\indices*{*^{b'}_b^{c'}_c} &&=\frac{1}{\kappa}\sum_{\gamma\in\varphi}(s^{b'\gamma }_b)^* t^{c'\gamma}_c{\Fm{c'}{\gamma}{a}{b}{b'}{c}}^*\sqrt{\frac{d_{c'}d_\gamma}{d_c}}
,\label{eq:Amatrix} \\
& B\indices*{*^{a'}_a^{c'}_c} &&=\frac{1}{\kappa}\sum_{\gamma\in\varphi}(r^{\gamma a'}_a)^* t^{\gamma c'}_{c}\Fm{a}{b}{\gamma}{c'}{a'}{c}\sqrt{ \frac{d_{c'}d_\gamma}{d_c}}, \label{eq:Bmatrix}\\
& C\indices*{*^{a'}_a^{b'}_b} &&=\frac{1}{\kappa}\sum_{\gamma\in\varphi}(r^{a'\gamma}_a)^* s^{\gamma b}_{b'}\Fm{a}{b}{a'}{b'}{\gamma}{c}\sqrt{\frac{d_{a'}d_{b'}}{d_c}}.\label{eq:Cmatrix}
\end{alignat}
\end{subequations}
The above matrices $A$, $B$, and $C$ are independent of the multiplicity. One then sees that a VLC vector $\vlc{r}{s}{t}{a}{b}{c}$ for given $r$, $s$, and $t$ is the common $+1$ eigenvector of the three matrices in Eq. \eqref{eq:ABCmatrices}. The general solution for such a $+1$ eigenvector may contain a few unknown parameters. There are normalization conditions (Eqs. B4 to B6 in Ref.\cite{Eliens2013}) that can partially fix these parameters, such that only certain phase parameters remain. One can then use two conditions (Eqs. B13 and B14 in Ref.\cite{Eliens2013}) to further fix the relative phases between VLC vectors that are related by permutation of the sectors connected at a vertex. In the end, one ends up with an overall phase parameter for each VLC vector, which cannot be fixed because of the freedom in $\T$ to redefine elementary vertices by a phase\cite{Eliens2013}. Note that this overall phase freedom should not be confused with the gauge freedom that one can rescale each leg of a vertex by a phase, i.e. $\vlc{t}{v}{r}{a}{b}{c}\rightarrow \vlc{t}{v}{r}{a}{b}{c} \alpha^r_a\alpha^s_b/\alpha^t_c$. Nevertheless, since we consider self-dual anyon models only, namely the models with $a=\bar{a}$, these phases $\alpha$ must be merely $\pm 1$. Clearly, we can gauge-fix this freedom by invoking only the special VLCs defined in Eq. \eqref{eq:tdef}, leading to a constraint for self-dual models:
\[
\frac{t^{\gamma a}_{b}}{{t^{\gamma a}_b}^*}=R^{\gamma b}_a{\Fm{1}{a}{\gamma}{b}{\gamma}{a}}^* \Fm{1}{\gamma}{b}{a}{b}{\gamma}\Fm{b}{a}{\gamma}{1}{a}{\gamma}.
\] 

The $F$ symbols of $\T$ are expressible in terms of those of $\A$ and the VLC's, as follows\footnote{This equation corrects the original equation in Ref. \cite{Eliens2013}, which mistakenly sums over the index $f$ rather than $d$.}.
\be\label{eq:FmatrixT}
\begin{aligned}
\Fm{r}{s}{t}{u}{v,\mu\nu}{w,\mu'\nu'}=&\sum_{abcde}\vlc{t}{v}{r}{c}{e}{a}_\mu^* \vlc{v}{s}{u}{e }{b}{d}_\nu \vlc{r}{s}{w}{a}{b}{f}_{\mu'}^*\vlc{t}{u}{w}{c}{d}{f}_{\nu'}  \\
&\x \sqrt{\frac{d_ad_bd_cd_d}{d_rd_sd_td_u}}\frac{d_w}{d_f}\Fm{a}{b}{c}{d}{e}{f},\quad \forall f\in w.
\end{aligned}
\ee
 From Eq. \eqref{eq:FmatrixT}, the consistency equations \eqref{eq:ABCeqs}, Eqs. \eqref{eq:ABCmatrices}, and the normalization conditions (((to be added in appendix)))   one can find the following identities of $\T$'s $F$ symbols.
\begin{align}
\Fm{r}{\varphi}{t}{u}{u,\mu 1}{r,1\nu}=\Fm{r}{s}{\varphi}{u}{r,1\nu}{u,\mu 1}= \delta_{\mu \nu} \\
\Fm{r}{s}{r}{s}{\varphi,11}{w,\mu\nu}=\sqrt\frac{d_w}{d_rd_s}\delta_{\mu\nu},
\end{align}
where the script $1$ labels the vertices that do not have multiplicity greater than one. One can actually infer these identities directly from the consistency conditions in Eq. \eqref{eq:ABCconsistency}.

\subsection{Braiding in the $\T$ phase}
Although the $\T$ phase might not have a Braided Tensor Category (BTC) description, in certain cases, we can still define the braiding operators such as the $R$ and monodromy $M$ operators that act on the topological sectors of the $\T$ phase, in particular the cases where at least one confined sector is involved. In fact, because two unconfined sectors, namely sectors in the bare $\U$ phase can never fuse to a confined sector, a $\T$ vertex under our consideration involves either only one $\U$ sector or three $\U$ sectors. Since the bare $\U$ phase is a BTC and the corresponding topological data has been generally studied in Ref. \cite{Eliens2013}, we shall not dwell on it in this paper. 

We would like to define monodromies between anyons in $\T$ via $\A$. By means of the VLCs, we first lift a $\T$ vertex to a linear combination of $\A$ vertices, then the monodromy acting on the $\T$ vertex is defined to act on each of the $\A$ vertex in this combination. The monodromy therefore rotates the $\T$ state embedded in $\A$ to some other state in $\A$ (this may not be a state of $\T$ any longer).
i.e.
\be\label{eq:DefTmonodromy}
\begin{aligned}
M_\T\BLvert\Yup[1]{r}{s}{t}{\mu}\Brangle &=M_\A\sum_{a,b,c}\vlc{r}{s}{t}{a}{b}{c}_\mu \BLvert\Yup[1]{a}{b}{c}{}\Brangle\\
& =\sum_{a,b,c}\vlc{r}{s}{t}{a}{b}{c}_\mu M^{ab}_c \BLvert\Yup[1]{a}{b}{c}{}\Brangle\\
& =\sum_{a,b,c}\vlc{r}{s}{t}{a}{b}{c}_\mu' \BLvert\Yup[1]{a}{b}{c}{}\Brangle,
\end{aligned}
\ee  
where 
\be
\vlc{r}{s}{t}{a}{b}{c}_\mu'=\vlc{r}{s}{t}{a}{b}{c}_\mu M^{ab}_c =
\vlc{r}{s}{t}{a}{b}{c}_\mu\frac{\theta_c}{\theta_a\theta_b}.
\ee
Note that here as the VLCs $\vlc{r}{s}{t}{a}{b}{c}_\mu$ for fixed $r$, $s$, and $t$ are arranged in a vector indexed by the triple (abc), the monodromy operator $M_\A$ is always diagonalized in this basis. Besides, since vertices with different multiplicity indices are orthogonal in the Hilbert space, multiplicity mixing will not arise under monodromy actions.

If the new linear combination in the end of Eq. \eqref{eq:DefTmonodromy} turns out to be proportional to the lift of some $\T$ vertex, we can "{\bf pull back}" the combination to the $\T$ vertex potentially also with a phase factor. This then defines the matrix element $M^{rs}_t$ of $M_\T$. Nevertheless, a pullback does not always exist because the RHS of Eq. \eqref{eq:DefTmonodromy} may not be interpreted as  a state of  $\T$, or in other words, the lift of any $\T$ vertex. A natural but crucial question is under what condition the action of monodromy on the lift of a $\T$ vertex can be pulled back. Unless $M_\A\propto \mathds{1}$, whose action can always be pulled back, otherwise a general answer is not obvious. We now try to partially answer this question in several steps, assuming $M_\A\not\propto\mathds{1}$.

The monodromy operator $M_\A$ acts on a vector space whose dimension is determined by the number of lifts of the corresponding $\T$ vertex. Let us denote this space by $V^{rst}_\A$. This space is decomposable if any of the $\T$ sectors has more than one lifts. Suppose $r$ has multiple lifts $a$, we have the decomposition $V^{rst}_\A=\oplus_a V^{rst}_a$, where the lifts $b$, $c$ of $s$, $t$ are neglected as they label the basis in each subspace $V^{rst}_a$. We refer to this decomposition of $V^{rst}_\A$ the decomposition with respect to $r$'s lifts, and each subspace $V^{rst}_a$ is labeled by a particular lift $a$ of $r$. There may also be decompositions with respect to $s$'s and/or $t$'s lifts. One may wonder whether such a subspace is invariant up to an overall phase under the action of $M_\A$. The following proposition answers this question.
\begin{proposition}\label{lem:invSubspace}
For a space $V^{rst}_\A$, take any one of three legs to decompose the space, if the other two legs are sectors neither both in $\U$ nor both in $\T\setminus\U$, the subspaces in this decomposition are uninvariant subspaces of the monodromy operator $M_\A$ on $V^{rst}_\A$.   
\end{proposition}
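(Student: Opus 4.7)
Since $M_\A$ acts diagonally on the basis $\{\ket{a,b,c}:a\in r,\,b\in s,\,c\in t,\,N^c_{ab}=1\}$ of $V^{rst}_\A$ with eigenvalue $M^{ab}_c=\theta_c/(\theta_a\theta_b)$, the subspace $V^{rst}_a$ obtained by fixing a lift $a$ of $r$ is trivially invariant as a set; ``invariance up to an overall phase'' is therefore the statement that this eigenvalue is constant as $(b,c)$ ranges over admissible pairs with $a$ held fixed. The plan is to exhibit, in the mixed-sector case stated, two basis vectors in $V^{rst}_a$ whose eigenvalues disagree.

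Without loss of generality take $s\in\U$ and $t\in\T\setminus\U$. By Proposition~\ref{prop:equalLifts} together with its corollary, the confined sector $t$ has exactly two lifts $\{c_0,c_0\gamma\}$, and since $t$ has no well-defined spin in $\T$ one gets $M^{c_0\gamma}=\theta_{c_0\gamma}/\theta_{c_0}\neq 1$. The unconfined sector $s$ splits into two subcases: either $s$ has two lifts $\{b_0,b_0\gamma\}$ with $M^{b_0\gamma}=1$ (well-defined spin), or, via Proposition~\ref{prop:split}, $s$ has a single lift $b$ fixed by the condensate, $b\gamma=b$ (which forces $d_b>1$).

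In the two-lift subcase, $\gamma$-equivariance of the $\A$-fusion rules (since $\gamma$ is a simple current, $N^{c\gamma}_{a,b\gamma}=N^c_{ab}$) sends $\ket{a,b_0,c_0}\in V^{rst}_a$ to a partner $\ket{a,b_0\gamma,c_0\gamma}\in V^{rst}_a$; the eigenvalue ratio is $M^{c_0\gamma}/M^{b_0\gamma}=M^{c_0\gamma}\neq 1$. In the single-lift subcase, $a\otimes b=a\otimes(b\gamma)$ is $\gamma$-invariant as a multiset, so $c_0\in a\otimes b$ forces $c_0\gamma\in a\otimes b$, giving both $\ket{a,b,c_0}$ and $\ket{a,b,c_0\gamma}$ in $V^{rst}_a$ with eigenvalue ratio exactly $M^{c_0\gamma}\neq 1$. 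In either situation $M_\A|_{V^{rst}_a}$ fails to be scalar, proving uninvariance; decompositions with respect to $s$ or $t$ are handled by relabeling.

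The main obstacle I anticipate is ensuring that a starting admissible triple $(a,b_0,c_0)$ or $(a,b,c_0)$ actually exists in $V^{rst}_a$, i.e.\ that the subspace is at least two-dimensional in the relevant subcase. This should follow from $N^t_{rs}>0$ in $\T$ together with the restriction rule~\eqref{eq:commute}, which guarantees at least one compatible triple of lifts, after which the $\gamma$-equivariance step above produces the second basis vector with the shifted eigenvalue.
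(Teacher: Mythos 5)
Your proof is correct and follows the same core strategy as the paper: fix a lift $a$ of $r$, observe $M_\A$ acts diagonally with entries $\theta_c/(\theta_a\theta_b)$, and conclude that these entries cannot all coincide because $s$ being unconfined forces $\theta_b$ to be constant across lifts of $s$ while $t$ being confined forces $\theta_c$ to differ across lifts of $t$. Where you go beyond the paper's proof is in making the existence of the required pair of basis vectors explicit: the paper simply displays the diagonal block and compares two generic entries $\theta_{c_j}/(\theta_{a_i}\theta_{b_j})$ and $\theta_{c_k}/(\theta_{a_i}\theta_{b_k})$ without verifying that two admissible triples with distinct $c$-labels actually appear in the same subspace $V^{rst}_{a_i}$. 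Your $\gamma$-equivariance argument ($N^{c\gamma}_{a,b\gamma}=N^{c}_{ab}$ because $\gamma$ is a simple current), together with the case split on whether $s$ has one or two lifts via Propositions~\ref{prop:split} and~\ref{prop:equalLifts}, closes this gap cleanly and buys genuine rigor. The relabeling at the end to handle decompositions with respect to $s$ or $t$ also works: in each case the eigenvalue ratio between the two $\gamma$-paired basis vectors reduces to a single nontrivial factor $\theta_{x\gamma}/\theta_x\neq 1$ coming from whichever of the other two legs lies in $\T\setminus\U$.
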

\begin{proof}
Without loss of generality, let us take $r$ with lifts $\{a_1,a_2,\dots,a_i,\dots\}$ to decompose the space, and suppose $s\in\U$ whereas $t\in\T\setminus\U$. Since clearly all subspaces in the decomposition are either invariant or uninvariant simultaneously, we can consider a generic subspace $V^{rst}_{a_i}$ of $V^{rst}_\A$, since $M_\A$ is diagonal on $V^{rst}_\A$, the block of $M_\A$ that acts on $V^{rst}_{a_i}$ reads $\diag\{\theta_{c_1}/(\theta_{a_i}\theta_{b_1}),\dots, \theta_{c_j}/(\theta_{a_i}\theta_{b_j}),\dots\}$. Since $s$ is unconfined, it has a well-defined topological spin, and thus any two of its lifts, say $b_j$ and $b_k$ must satisfy $\theta_{b_j}=\theta_{b_k}$. On the contrary, because $t$ is confined, according to Ref.\cite{Eliens2013}, for any two lifts of $t$, say $c_j$ and $c_k$, $h_{c_j}-h_{c_k}\not\in\Z$; hence, $\theta_{c_j} \neq\theta_{c_k}$. Thus, $\theta_{c_j}/(\theta_{a_i}\theta_{b_j})\neq \theta_{c_k} /(\theta_{a_i}\theta_{b_k})$. That is, $V^{rst}_{a_i}$ is not invariant under $M_\A$.   
\end{proof}

One can infer from the proof above that if a subspace in a decomposition of $V^{rst}_\A$ is an invariant subspace, all other subspaces in the same decomposition are also invariant; however, these subspaces are invariant up to different phases. Since the rotation $M_\A$ does not affect the basis vertices in the lift of a $\T$ but only the VLCs, and the lift of a $\T$ vertex is  unique up to merely an overall phase, now that if the $V^{rst}_\A$ being rotated has subspaces invariant up to distinct phases, the action of $M_\A$ cannot turn the lift of the $\T$ vertex into the same lift or the lift of another $\T$ vertex up to an overall phase. This argument leads to the following necessary condition of pullback.
\begin{theorem}\label{thm:TMonodromyNC}
If the action of $M_\A$ on the lift of a $\T$ vertex has a pullback to certain $\T$ vertex (not necessarily a different one), then the lifted space $V^{rst}_\A$ of the $\T$ vertex admits no decomposition into invariant subspaces with respect to any of the three legs $r$, $s$, and $t$.
\end{theorem}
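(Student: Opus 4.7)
The plan is to prove the contrapositive: assuming that $V^{rst}_\A$ admits a decomposition into $M_\A$-invariant subspaces with respect to some leg (say $r$, with lifts $\{a_i\}$), I want to show that no pullback of $M_\A$'s action to any single $\T$ vertex can exist. This inverts the logical direction of Proposition \ref{lem:invSubspace}, but the same diagonal structure of $M_\A$ is the engine of the argument.

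First I would fix the setup. In the lifted basis indexed by triples $(abc)$ with $a\in r, b\in s, c\in t$, the operator $M_\A$ acts diagonally with eigenvalue $\theta_c/(\theta_a\theta_b)$. Decomposing $V^{rst}_\A=\oplus_i V^{rst}_{a_i}$ with respect to $r$'s lifts, invariance of each $V^{rst}_{a_i}$ means the eigenvalues $\theta_c/(\theta_{a_i}\theta_b)$ are constant on each block, i.e., each block is a scalar multiple of the identity with scalar $\lambda_i$. Observe also that the $\lambda_i$ cannot all be equal unless $M_\A\propto \mathds{1}$, a case excluded by the standing assumption preceding the theorem. So the action of $M_\A$ rescales the VLC coefficients within each subspace by distinct phases $\lambda_i$.

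Next I would invoke uniqueness of lifts. By the VLC framework of Eq.~\eqref{eq:vlc}, the lift of any $\T$ vertex $\Yup[1]{r}{s}{t}{\mu}$ is determined by the VLC vector $[r,s,t;a,b,c]_\mu$, which is fixed up to a single overall phase once the gauge is chosen (as recalled in the text following Eq.~\eqref{eq:ABCmatrices}). Consequently, if the post-monodromy coefficients $\vlc{r}{s}{t}{a}{b}{c}'_\mu = \lambda_i\vlc{r}{s}{t}{a}{b}{c}_\mu$ (for $a=a_i$) were to coincide, up to one global phase, with a VLC vector of some $\T$ vertex $(r,s,t';\nu)$, then the ratio between the new and old coefficients would be a single constant across all triples. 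But this ratio takes the value $\lambda_i$ on the block $V^{rst}_{a_i}$, and the $\lambda_i$ are not all equal; so no such single constant exists. Hence the post-monodromy vector is not the lift of any $\T$ vertex and the pullback cannot be defined.

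Finally I would note that the argument works verbatim if the decomposition is taken with respect to $s$ or $t$ instead of $r$, since the roles of the three legs enter symmetrically into the eigenvalue $\theta_c/(\theta_a\theta_b)$ and the uniqueness-up-to-phase property of VLC vectors. The main subtlety, rather than the computation, is justifying that the only admissible ``reinterpretation'' of the post-monodromy combination as a $\T$ lift is via a single overall phase; this is where the gauge-fixing discussion following Eq.~\eqref{eq:Cmatrix} is crucial, and the fact that vertices with different multiplicity indices are orthogonal (so $M_\A$ cannot mix them) ensures that one does not gain extra freedom from multi-dimensional fusion spaces. Once that is in place, the contradiction with invariance under distinct scalars $\lambda_i$ closes the argument.
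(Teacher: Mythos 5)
Your proof tracks the paper's own argument almost line by line: you pass to the contrapositive, note that an invariant-subspace decomposition forces $M_\A$ to act as a scalar $\lambda_i$ on each block (with the $\lambda_i$ not all equal once $M_\A\not\propto\mathds{1}$), then invoke uniqueness of the VLC vector up to one overall phase to conclude that block-wise distinct rescalings cannot reproduce any lift, which is exactly the paper's reasoning in the paragraph preceding the theorem. One small spot to tighten: your claim that a pullback would force the ratio of new to old VLC entries to be a single constant is immediate only when the target vertex is $(r,s,t;\mu)$ itself; for a possible target $(r,s,t;\nu)$ with $\nu\neq\mu$ you fall back (as the paper does) on the earlier assertion that multiplicity mixing does not arise under monodromy, which orthogonality of the lift vectors alone does not rigorously guarantee for a diagonal unitary, so that step would merit a more careful justification.
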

We have not settled on a most general sufficient condition of pullback, which however, may not exist because accidental symmetries in the anyon spectra may often exist in various topological phases, in particular the nonchiral ones. At this point, we are only able to claim that the action of $M_\A$ on the lift of any $\U$ vertex can be pulled back. A simple reasoning is that as long as we begin with a quantum group equivalent to a modular tensor category, we would end up with another modular tensor category $\U$ via our quantum group symmetry breaking scheme, and that modular tensor categories have well-behaving modular $S$ and $T$ matrices. In fact, one can directly write down the modular $S$ matrix of the $\U$ phase\footnote{The idea of deriving this $S$ matrix partially follows Eq. (86) in Ref.\cite{Eliens2013}; however, we choose to write the $S$ matrix of $\U$ explicitly in terms of the relevant VLCs and $\T$'s $F$ symbols, as this way is more transparent in the derivation.} in terms of the VLCs and part of $\A$'s topological data:
\be\label{eq:SmatrixU}
S_{st}=\frac{1}{\kappa D_\U}\sum_r \Fm{s}{t}{s}{t}{\varphi}{r}\sum_{abc\in rst}\vlc{s}{t}{r}{a}{b}{c}\vlc{s}{t}{r}{a}{b}{c}^* \sqrt{d_ad_bd_c}\frac{\theta_c}{\theta_a\theta_b},
\ee
where $D_\U=\sqrt{\sum_{u\in\U} d^2_u}$ is the total quantum dimension of the $\U$ phase. Note that the above equation contains no multiplicity indices, as the multiplicity does not arise for $\U$ vertices for our consideration in this paper. Similarly, we can also try to define the $R$ matrices for the $\T$ phase by means of pullback. That is, we first do the following.
\be\label{eq:defTRmatrix}
\begin{aligned}
\BLvert\VRxoss[0]{1}{r}{s}{t}{\mu}\Brangle &=\sum_{a,b,c}\vlc{s}{r}{t}{b}{a}{c}_\mu \BLvert\VRxoss[0]{1}{a}{b}{c}{}\Brangle\\
& =\sum_{a,b,c}\vlc{s}{r}{t}{b}{a}{c}_\mu R^{ba}_c \BLvert\Yup[1]{a}{b}{c}{}\Brangle\\
& \stackrel{?}{\propto}\sum_{a,b,c}\vlc{r'}{s'}{t'}{a}{b}{c}_\mu\BLvert\Yup[1]{r'}{s'}{t'}{}\Brangle,
\end{aligned}
\ee
where $R^{ba}_c$ is the $R_\A$ matrix elements of $\A$, and $r's't'$ are not necessarily different from $rst$. If the proportionality in the last equation indeed holds, we can define the proportionality factor to be the corresponding matrix element of the $R_\T$. As in the case with defining $M_\T$, there may not exist a sufficient condition that the $R_\A$ on the lift of a $\T$ vertex can be pulled back; however, a necessary condition is similar to Theorem \ref{thm:TMonodromyNC}. As a remark, that a $R_\T$ matrix element exists for a $\T$ vertex does not guarantee that the corresponding $M_\T$ matrix element exist for the vertex. Roughly speaking, $R$ matrices are not gauge invariant operators as opposed to monodromy matrices; hence, it is reasonable that there are fewer pullbackable actions of $M_\A$ on the lifts of $\T$ vertices than the pullbackable actions of $R_\A$ on the corresponding $\T$ vertices. More precisely speaking, that a $R_\A$ action on the lift of a $\T$ vertex can be pulled back implies that there are no invariant subspaces in the decomposition of the lifted vector space with respect to any leg of the vertex; however, since $M_\A$ is essentially $R_\A^2$, it is likely that the uninvariant subspaces under the action of the $R_\A$ turns out to be invariant under the action of the $M_\A$, thus not pullbackable any more. We also remark that since $M_\T$ generates symmetry transformation on the unconfined particles for the group element $g$ associated to the confined particle being braided around, we conclude that $M_T^n$ necessarily has a pullback for $g^n = 1$. In the following two sections, we shall see examples supporting these facts.

\section{$\Z_2$ symmetric $\Z_2$ toric code from \II breaking}\label{sec:Ising2break}
This example has been discussed in \cite{Hung2013}, and a connection between the theories has been foreshadowed  in \cite{Bais2002,Bais2003} and also proposed in \cite{Lu2013}. In our previous discussion \cite{Hung2013}, we have made a guess of the mapping between the Hilbert space of the broken theory and the parent theory, with which we demonstrate how we can define the braiding and monodromy between an unconfined particle and a confined one. This mapping of the Hilbert space is precisely defined by VLCs in the framework of \cite{Eliens2013}, which was almost identical to our guess apart from a treatment of the confined particles. In any event, in the following, we would like to revisit the example adopting the formalism of \cite{Eliens2013}, which has the virtue of making associativity of the fusion algebra in the broken theory explicit. The topological data of the \II theory is reviewed in the appendix. The anyon content of the \II theory is as listed in table \ref{tab:isingany}. To recover the $\Z_2$ symmetric toric code model, the appropriate ``condensate''  is taken as $(\psi,\psi)$. The anyons of the broken theory and their relationship with the parent \II theory is listed in table \ref{tab:isingTany}. 

\begin{table}[h!]
\centering
{
\setlength{\extrarowheight}{1.5pt}
\begin{tabular}{m{5em}|m{6em}| m{7em}|}\hline
&Anyon  in $\A $             & Corresponding anyons in $\T$         \\ \hline
unconfined&$(1,1) ~ (\psi,\psi) $             & $1$      \\ \hline 
&$(\sigma,\sigma)$   & $e \oplus m$       \\ \hline
&$(\psi,1)~ (1,\psi)$          & $f$        \\ \hline 
\hline
confined&$(1,\sigma) ~ (\psi,\sigma)$    & $\chi$\\ \hline
&$(\sigma,1)~(\sigma,\psi)$         &  $\tilde\chi$ \\ \hline
\end{tabular}}
\caption{\II anyon content and their relationship with the broken theory.} \label{tab:isingTany}
\end{table}  
The unconfined sectors correspond precisely to those of the toric code model. As we discussed in detail in the previous section, the symmetry transformation properties of the anyons can be read off from their monodromy property with a chosen confined anyon. Without loss of generality, we can pick $\chi$ to be the symmetry generating anyon. All the VLCs mapping each $\T$ 3-point vertex back to a combination of $\A$ 3-point vertices, are solved and listed in the appendix. These data allow us to recover the symmetry transformation properties of our anyons, using Eq. \eqref{eq:DefTmonodromy}.  Let us note here an important difference in our considerations from that of Ref.\cite{Eliens2013}. As we have repeatedly emphasized, in the context of an SEP, the ``condensate'', while belonging to the trivial topological sector, carries non-trivial global symmetry charge. Therefore, unlike the description of true anyon condensation where the condensate is strictly indistinguishable from the trivial sector, here, we have to keep track of the difference between the multiple lifts in $\A$ of an anyon in $\T$. This is how the table is obtained above, where, for instance, the phase accumulated in the monodromy of $(1,\psi)$ around a confined particle is clearly different from that of the $(\psi,1)$. This however underlines a missing ingredient in the current framework in dealing with SEP's: we need a systematic way to treat the distinction between the different lifts of a single $\T$ anyon.  As already brought up in the previous section, the different charges of the lifts  in $\A$ of an anyon in $\T$ leads to VLC's rotated in orthogonal directions after a monodromy with confined particles, which has no interpretation as states in the Hilbert space of $\T$ in the current framework. Therefore this is very suggestive that to treat the Hilbert space of the SEP as opposed to $\T$, we should include these new sets of VLC's generated by monodromy and take them as a map of valid states in the SEPs to states in $\A$.  We are currently working on a more complete and rigorous Mathematical treatment of the problem via the tools of graded tensor category. These will be reported elsewhere.

The symmetry transformation properties of the unconfined sectors are given in table \ref{tab:isingUtrans}.
\begin{table}[h!]
\centering
{
\setlength{\extrarowheight}{1.5pt}
\begin{tabular}{|m{6em}| m{10em}|}\hline
 Anyons  & Symmetry transformation generated by $\chi$ (overall phase unless specified)      \\ \hline
$(1,1)\in \mathbf{1}  $             &  untransformed     \\ \hline 
$(\psi,\psi)\in\mathbf{1}  $             &  $-1$    \\ \hline 
$(\psi,1)\in\epsilon$          &  $-1$    \\ \hline 
$(1,\psi)\in\epsilon$          &   $1$    \\ \hline 
$(\sigma,\sigma)= e \oplus m$   & $-\e^{\ii \pi/4}  \sigma_x $       \\ \hline
\end{tabular}}
\caption{Symmetry transformation properties of the unconfined sector. $\sigma_x$ is the Pauli matrix denoting the transformation that exchanges the two sectors $e$ and $m$.} \label{tab:isingUtrans}
\end{table}  

Let us also demonstrate the last row in Table \ref{tab:isingUtrans}, which is the most significant property of the $\Z_2$ toric code with a nonlocally realized $\Z_2$ global symmetry, by an explicit application of Eq. \eqref{eq:DefTmonodromy}.
\begin{align}
M_\T\BLvert\Yup[1]{u_\pm}{\chi}{\tilde\chi}{}\Brangle  =&M_\A\sum_{b\in\chi,c\in\tilde\chi} \vlc{u_\pm}{\chi}{\tilde\chi}{\ss}{b}{c} \BLvert\Yup[1]{\ss}{b}{c}{}\Brangle\nonumber\\
=&\pm \frac{\e^{-\frac{\ii\pi}{4}}\theta_{\si}}{\theta_{\ss}\theta_{\is}} \BLvert\Yup[1]{\ss}{\is}{\si}{}\Brangle + \frac{\theta_{\spsi}}{\theta_{\ss}\theta_{\is}} \BLvert \Yup[1]{\ss}{\is}{\spsi}{}\Brangle\nonumber\\
&+\frac{\theta_{\si}}{\theta_{\ss}\theta_{\psis}}\BLvert\Yup[1]{\ss}{\psis}{\si}{}\Brangle \mp\frac{\e^{\frac{\ii\pi}{4}}\theta_{\spsi}}{\theta_{\ss}\theta_{\psis}} \BLvert\Yup[1]{\ss}{\psis}{\spsi}{}\Brangle\nonumber\\
=&-\e^{\frac{\ii\pi}{4}}\left\{\mp\e^{-\frac{\ii\pi}{4}}\BLvert\Yup[1]{\ss}{\is}{\si}{}\Brangle +\BLvert \Yup[1]{\ss}{\is}{\spsi}{}\Brangle\right.\nonumber\\
&\left. +\BLvert\Yup[1]{\ss}{\psis}{\si}{}\Brangle\pm\e^{\frac{\ii\pi}{4}} \BLvert\Yup[1]{\ss}{\psis}{\spsi}{}\Brangle\right\}\nonumber\\
=&-\e^{\frac{\ii\pi}{4}}\BLvert\Yup[1]{u_\mp}{\chi}{\tilde\chi}{}\Brangle,
\end{align}
where $u_+=e$ and $u_-=m$ are understood as in Appendix \ref{app:Ising2vlc}, and the last equality manifests the pullback. Such an exchange symmetry between charge and flux in this type of $\Z_2$-symmetric $\Z_2$ spin liquid has been also observed and studied via introducing topological defects\cite{Bombin2010,Kitaev2012,Barkeshli2012a,You2012a,You2013,Teo2013,Teo2013a,Barkeshli,Barkeshli2013} into the $\Z_2$ spin liquid, crossing which a chargeon (fluxon) turns into a fluxon (chargeon). 

In addition to symmetry transformation properties, it is also of interest to inspect properties of the confined sector. We note that following \cite{Eliens2013}, one can extract the $F$ matrix in the $\T$ theory describing $F$-moves of four defects. We note that $1,f, \chi$ (or similarly ($1,f,\tilde{\chi}$)) forms a fusion sub-algebra and the $F$ matrices $F^{\chi\chi}_{\chi\chi}$ and $F^{\tilde\chi\tilde\chi}_{\tilde\chi\tilde\chi}$ agree precisely with $F^{\sigma\sigma}_{\sigma\sigma}$ in the Ising model:
\be
F^{\chi\chi}_{\chi\chi} =F^{\tilde\chi\tilde\chi}_{\tilde\chi\tilde\chi} = \frac{1}{\sqrt{2}} \begin{pmatrix} 
1&1\\
1&-1\end{pmatrix}.
\ee

\section{$\Z_2$ symmetric chiral Fibonacci$\x$Fibonacci Model from $SU(2)_8$ breaking}\label{sec:su2_8break}
In this section, we construct and study the  Fibonacci $\times$ Fibonacci phase enriched by $\Z_2$ symmetry by breaking a non-Abelian topological phase with the hidden quantum group symmetry $U_q(SU(2))$, where $q=\exp(\ii\pi/5)$. According to the relation between Wess-Zumino-Witten (WZW) theory and quantum groups, the unitary, irreducible representations of a quantum group $U_q(G)$, i.e., the topological sectors of the corresponding topological phase, are in one-to-one correspondence with the chiral primary fields of the WZW theory at level $k$, with central charge $c=k\mathrm{dim}_G/(k+\hat g)$, where $\hat g$ is the dual Coxter label of $G$. The relation between $q$ and $k$ reads $q=\exp(\ii 2\pi/(k+\hat g))$. The chiral primary fields of  the WZW theory at level $k$ are in fact the unitary, irreducible representations of the chiral algebra $G_k$. As such, for   $U_q(SU(2))$ with $q=\exp(\ii\pi/5)$, the corresponding chiral algebra is $SU(2)_8$. The fusion and braiding of the topological sectors of the quantum group are precisely those of the chiral primary fields of the corresponding WZW theory. Therefore, we can study the breaking of $U_q(SU(2))$ equivalently via the breaking of $SU(2)_8$. 

The topological data of $SU(2)_k$ is reviewed in Appendix \ref{app:su2_kData}. Note that the vacuum of $\A=SU(2)_8$ is conventionally denoted by $0$ instead of 1. To break this quantum group symmetry, we condense the only simple current sector, namely the sector $8$ in the spectrum. According to the fusion algebra and topological spins of all the $SU(2)_8$ sectors, one could obtain the broken phase $\T$ that contains both confined and unconfined sectors, and the phase $\U$ that has the unconfined sectors only, which are summarized in Table \ref{tab:su2_8TUany} below \cite{Bais2009a}.  
\begin{table}[h!]
\centering
{
\setlength{\extrarowheight}{2pt}
\begin{tabular}{m{4em}| m{6em}| m{5em}| m{2em}| m{5em}}\hline
$\A$ sectors               & $\T$ sectors         & $d$ & $h$ &  \\ \hline
$0,\;8\rightarrow$             & $\mathbf{1}$   & 1 &  0   & unconfined\\ \hline 
$2,\;6\rightarrow$        & $(\tau,\tau)$ & $\phi^2$ & $\frac{1}{5}$ & unconfined \\ \hline
$\ \ \; \; 4\rightarrow$          & $(1,\tau)+(\tau,1)$  & $\phi$ & $\frac{3}{5}$   & unconfined\\ \hline
$1,\;7\rightarrow$        & $\zeta$  & $\sqrt{\frac{5+\sqrt{5}}{2}}$ & & confined\\ \hline
$3,\;5\rightarrow$   & $\tilde\zeta$         & $\sqrt{5+2\sqrt{5}}$ & & confined\\ \hline
\end{tabular}}
\caption{$SU(2)_8$ breaking: anyon content. Where $\phi=\frac{1+\sqrt{5}}{2}$. Note that only unconfined sectors have well defined topological spins.}
\label{tab:su2_8TUany}
\end{table}

The topological sectors of the $\U$ phase form the fusion algebra of the chiral Fibonacci $\x$ Fibonacci model, namely, $(1,\tau)\x(1,\tau)=\mathbf{1}+(1,\tau)$, $(\tau,1)\x(\tau,1)=\mathbf{1}+(\tau,1)$, $(1,\tau)\x(\tau,1)=(\tau,\tau)$, and $(\tau,\tau)^2=\mathbf{1}+(1,\tau)+(\tau,1)+(\tau,\tau)$. In contrast to the previous example, the unbroken bare phase $\U$ is a non-Abelian phase. Intuitively, we can actually almost infer that the symmetry on this bare phase $\U$ is yet again $\Z_2$ because as in the previous example, here the lifts of $\T$'s vacuum---the true vacuum sector $0$ and the charged condensate $8$---form a $\Z_2$ fusion subalgebra of $\A's$ fusion algbera, and the fusion of any two confined sectors in Table \ref{tab:su2_8TUany} falls in $\U$, implying also a $\Z_2$ structure. We can make this intuition precise by compute the VLCs and hence the topological properties of the $\T$ and $\U$ phases, in particular extracting the global symmetry action on the $\U$ phase from the braiding between the unconfined sectors and any representative of the confined ones. Appendix \ref{app:su2_8vlc} collects for this case all the VLCs, and the well-defined $R$ and $M$ matrix elements of the $\T$ and $\U$ phases.

As we have repeatedly emphasized and explicitly shown in the previous section, the actual SEP obtained (i.e. a bare phase $\U$ enriched by a global symmetry) from breaking the phase $\A$ by a charged condensate should not be confused with the $\T$ phase. SEP has a larger Hilbert space than $\T$ which requires extra sets of VLC to account for the distinction between a given topological sector with different global charges, which are by definition identified in $\T$. In the SEP so obtained, any basis state should have well-defined transformation properties under the global symmetry, in the sense that the monodromy $M_\A$ on its lift can be pulled back.

Using the prescription for monodromies discussed in the previous sections, and taking $\zeta$ as the representative of the confined sectors that generates the global symmetry, we  obtain the following table that presents the charges of the various sectors in the $\Z_2$ symmetric Fibonacci $\x$ Fibonacci phase embedded in $SU(2)_8$.
\begin{table}[h!]
\centering
{
\setlength{\extrarowheight}{1.5pt}
\begin{tabular}{| m{8em}| m{7em}|}\hline
 Anyons  & Symmetry transformation generated by $\zeta$ (overall phase unless specified)      \\ \hline
$0\in \mathbf{1}  $             &  untransformed     \\ \hline 
$8\in \mathbf{1} $             &  $-1$    \\ \hline 
$2\in (\tau,\tau)$          &  $1$    \\ \hline 
$6\in (\tau,\tau)$          &   $-1$    \\ \hline 
$4= (1,\tau) \oplus (\tau,1)$   & $\e^{-\ii 3 \pi/5} \sigma_x$   \\ \hline
\end{tabular}}
\caption{Symmetry transformation properties of the unconfined sector in $SU(2)_8$ breaking. $\sigma_x$ is the Pauli matrix denoting the transformation that exchanges the two sectors $(1,\tau)$ and $(\tau,1)$.} 
\label{tab:FiboUtrans}
\end{table} 

In Table \ref{tab:FiboUtrans}, we keep track of  the different lifts of each Fibonacci $\x$ Fibonacci sector and find the $\Z_2$ charges they carry. Like in the case with \II breaking, here the differently charged lifts of $(\tau,\tau)$ lead to different VLCs and hence different linear combinations of $SU(2)_8$ vertices that are transformed orthogonally under the monodromy with a confined sector, such as $\zeta$, and thus result in new linear combinations of $SU(2)_8$ vertices that have no interpretation in the Hilbert space of the $\T$ phase. Again, this suggests we may take the coefficients in the new linear combinations as also valid VLCs in the SET which has a larger Hilbert space than $\T$.  The complete Hilbert space of the SET, cannot be directly obtained in the current approach of computing the VLCs of $\T$. At this moment, however, we are still working on a more Mathematically rigorous and systematic framework based on graded tensor categories.

\section{Discussions and Outlook}\label{sec:disc}
In this paper, we continue along the path started in \cite{Hung2013} in embedding
SEPs in a larger topological theory. We made use of the framework and tools
developed in \cite{Eliens2013} that describes anyon condensation and adopt it
for our purpose of extracting symmetry and topological data of the SEP from
a parent topological phase. We demonstrate with a few explicit examples
how the non-Abelian SETs can be reconstructed from a parent theory. We note however,
that the current framework is still incomplete, because it strictly identifies anyons in the
parent theory related by fusion with the  ``condensate'' which 
carries non-trivial global symmetry charge. A complete description of SEP should involve enlarging
the Hilbert space of the condensed phase. A rigorous Mathematical treatment will involve
graded tensor category theory, which we will report elsewhere in a forthcoming publication.

\begin{acknowledgements}
We thank Davide Gaiotto, Yuting Hu, Tian Lan, Juven Wang. LYH is supported by the Croucher Fellowship. This research was supported in part by Perimeter Institute for
Theoretical Physics. Research at Perimeter Institute is
supported by the Government of Canada through Industry
Canada and by the Province of Ontario through the
Ministry of Economic Development \& Innovation.
\end{acknowledgements}

\begin{appendix}
\section{\II topological data}\label{app:Ising2ModData}
In this appendix we construct the \II topological data from that of the Ising topological order. 
\subsection{Ising topological data}
We hereby first collect the topological data of the Ising topological order. Ising topological order contains three topological sectors $\{1,\sigma,\psi\}$, with quantum dimeansion $\{1,\sqrt{2},1\}$, topological spins $\{h_a=0,h_\sigma=\tfrac{1}{16},h_\psi=\tfrac{1}{2}\}$, and fusion group $\{\sigma^2=1+\psi,\psi^2=1,\sigma\psi=\psi\sigma=\sigma\}$. Note that each Ising anyon is the anti-anyon of itself. The nontrivial $R$ matrix entries are $R^{\ss}_1=\e^{-\ii\tfrac{\pi}{8}}$, $R^{\pp}_1=-1$, $R^{\psis}_\sigma=R^{\spsi}_\sigma=-\ii$, and $R^{\ss}_\psi=\e^{\ii\tfrac{3\pi}{8}}$. The only nonzero $F$ matrix entries are as follows.
\be\label{eq:IsingF}
\begin{aligned}
&\Fm{\sigma}{\sigma}{\sigma}{\sigma}{1}{1}=
\Fm{\sigma}{\sigma}{\sigma}{\sigma}{1}{\psi}=
\Fm{\sigma}{\sigma}{\sigma}{\sigma}{\psi}{1}=
-\Fm{\sigma}{\sigma}{\sigma}{\sigma}{\psi}{\psi}=\frac{1}{\sqrt{2}}\\
&\Fm{\sigma}{\psi}{\sigma}{\psi}{1}{\sigma}=
\Fm{\psi}{\sigma}{\psi}{\sigma}{1}{\sigma}=
\Fm{1}{\sigma}{1}{\sigma}{1}{\sigma}=\Fm{\sigma}{1}{\sigma}{1}{1}{\sigma}=1\\
&\Fm{\sigma}{\sigma}{\psi}{\psi}{\sigma}{1}=\Fm{\psi}{\psi}{\sigma}{\sigma}{\sigma}{1}
=\Fm{\sigma}{\sigma}{1}{1}{\sigma}{1}=\Fm{1}{1}{\sigma}{\sigma}{\sigma}{1}=1\\
&\Fm{\sigma}{\psi}{\psi}{\sigma}{\sigma}{\sigma}=\Fm{\psi}{\sigma}{\sigma}{\psi}{\sigma}{\sigma}=-1\\
&\Fm{\sigma}{1}{1}{\sigma}{\sigma}{\sigma}=\Fm{1}{\sigma}{\sigma}{1}{\sigma}{\sigma}=
\Fm{\psi}{1}{1}{\psi}{\psi}{\psi}=\Fm{1}{\psi}{\psi}{1}{\psi}{\psi}=1\\
&\Fm{\sigma}{1}{\sigma}{\psi}{\psi}{\sigma}=\Fm{\sigma}{\psi}{\sigma}{1}{\psi}{\sigma}
=\Fm{1}{\sigma}{\psi}{\sigma}{\psi}{\sigma}=\Fm{\psi}{\sigma}{1}{\sigma}{\psi}{\sigma}=1\\
&\Fm{1}{\psi}{\sigma}{\sigma}{\sigma}{\psi}=\Fm{\sigma}{\sigma}{1}{\psi}{\sigma}{\psi}
=\Fm{\psi}{1}{\sigma}{\sigma}{\sigma}{\psi}=\Fm{\sigma}{\sigma}{\psi}{1}{\sigma}{\psi}=1\\
&\Fm{\sigma}{1}{\psi}{\sigma}{\sigma}{\sigma}=\Fm{\psi}{\sigma}{\sigma}{1}{\sigma}{\sigma}
=\Fm{1}{\sigma}{\sigma}{\psi}{\sigma}{\sigma}=\Fm{\sigma}{\psi}{1}{\sigma}{\sigma}{\sigma}=1\\
&\Fm{1}{1}{\psi}{\psi}{\psi}{1}=\Fm{\psi}{\psi}{1}{1}{\psi}{1}=\Fm{1}{\psi}{1}{\psi}{1}{\psi}
=\Fm{\psi}{1}{\psi}{1}{1}{\psi}=1\\
&\Fm{1}{1}{1}{1}{1}{1}=\Fm{\psi}{\psi}{\psi}{\psi}{1}{1}=1.
\end{aligned}
\ee

\subsection{Topological data of \II}
The dual Ising topological order, $\overline{\text{Ising}}$ is the same as the Ising one except that all the topological spins are flipped. This, however, does not change the $S$ matrix and $F$ matrices but only complex conjugates the $R$ matrix. Usually the sectors in $\overline{\text{Ising}}$ are denoted by $\{1,\bar\sigma,\bar\psi\}$; however, for our convenience, we would denote the usual Ising sectors by $(a,1)$ and the $\overline{\text{Ising}}$ sectors by $(1, a)$, with $a\in\{1,\sigma,\psi\}$, but keeping in mind that $(1,a)$ has topological spins $\{0,-\tfrac{1}{16},-\tfrac{1}{2}\}$. As such, $(a,b)$ is a topological sector of \II, with quantum dimension $d_{(a,b)}=d_ad_b$ and topological spin $h_{(a,b)}=h_a+h_b$. The fusion algebra is just the direct product of two copies of the Ising fusion algebra, namely $(a,b)\x(a',b')=(a\x a',b\x b')$. There are nine sectors in total that are tabulated as follows.
\begin{table}[h!]
\centering
{
\setlength{\extrarowheight}{1.5pt}
\begin{tabular}{m{3em}| m{2em}| m{3em}}\hline
Anyon               & $d$         & $h$  \\ \hline
$(1,1)$             & $1$         & $0$\\ \hline 
$(1,\sigma)$        & $\sqrt{2}$  & $-1/16$ \\ \hline
$(1,\psi)$          & $1$         & $-1/2$\\ \hline
$(\sigma,1)$        & $\sqrt{2}$  & $1/16$\\ \hline
$(\sigma,\sigma)$   & $2$         & $0$\\ \hline
$(\sigma,\psi)$     & $\sqrt{2}$  & $-7/16$\\ \hline
$(\psi,1)$          & $1$         & $1/2$\\ \hline
$(\psi,\sigma)$     & $\sqrt{2}$  & $7/16$\\ \hline
$(\psi,\psi)$       & $1$         & $0$\\ \hline
\end{tabular}}\label{tab:isingany}
\caption{\II anyon content.}
\end{table}
We also denote an \II anyon by $(ab)$ or simply $ab$ wherever no ambiguity arises. The $R$ matrix elements of two \II sectors $ab$ and $a'b'$ fusing to $cd$ reads $R^{(ab)(a'b')}_{(cd)}=R^{aa'}_c \overline{R^{bb'}_d}$, where the $\bar{R}$ is the complex conjugate of the $R$ value of the Ising. An $F$ matrix of the \II is the tensor product of the two $F$ matrices of the corresponding anyons of the Ising phase, and entry-wise this means
\be
\Fm{(aa')}{(bb')}{(cc')}{(dd')}{(ee')}{(ff')}=\Fm{a}{b}{c}{d}{e}{f} \Fm{a'}{b'}{c'}{d'}{e'}{f'},
\ee
which can be easily obtained from Eq. \eqref{eq:IsingF}. 

\section{VLC's for \II breaking}\label{app:Ising2vlc}
This appendix records the VLCs for the \II breaking by condensing the sector $(\psi,\bar\psi)$. Please refer to Table \ref{tab:isingTany} for the notations that appear in below. Note that $u_+=e$ and $u_-=m$ in the $\Z_2$ toric code are understood.
\be
(u_\pm)^{(\pp)(\ss)}_{(\ss)}=(u_\pm)^{(\ss)(\pp)}_{(\ss)}=(u_\pm)^{(\ss)(\ss)}_{(\pp)}=\pm 1.
\ee
\be
\epsilon^{(\pp)(\psii)}_{(\ip)}=\epsilon^{(\ip)(\pp)}_{(\psii)}= \epsilon^{(\ip)(\psii)*}_{(\pp)}=\ii.
\ee
\be
\epsilon^{(\psii)(\pp)}_{(\ip)}=\epsilon^{(\pp)(\ip)}_{(\psii)}=\epsilon^{(\psii)(\ip)*}_{(\pp)}=-\ii.
\ee
\be
\begin{aligned}
\vlc{u_\pm}{\chi}{\tchi}{\ss}{\is}{\si} &=\pm \e^{-\ii\pi/4},\\
\vlc{u_\pm}{\chi}{\tchi}{\ss}{\is}{\spsi} &=1,\\ 
\vlc{u_\pm}{\chi}{\tchi}{\ss}{\psis}{\si} &=1,\\
\vlc{u_\pm}{\chi}{\tchi}{\ss}{\psis}{\spsi} &= \mp\e^{\ii\pi/4},
\end{aligned}
\ee
\be
\begin{aligned}
\vlc{\chi}{u_\pm}{\tchi}{\is}{\ss}{\si} &=\pm \e^{\ii\pi/4},\\
\vlc{\chi}{u_\pm}{\tchi}{\is}{\ss}{\spsi} &=1,\\ 
\vlc{\chi}{u_\pm}{\tchi}{\psis}{\ss}{\si} &=1,\\
\vlc{\chi}{u_\pm12}{\tchi}{\psis}{\ss}{\spsi} &= \mp\e^{-\ii\pi/4},
\end{aligned}
\ee
\be
\begin{aligned}
\vlc{u_\pm}{\tchi}{\chi}{\ss}{\si}{\is} &=\pm \e^{\ii\pi/4},\\
\vlc{u_\pm}{\tchi}{\chi}{\ss}{\si}{\psis} &=1,\\ 
\vlc{u_\pm}{\tchi}{\chi}{\ss}{\spsi}{\is} &=1,\\
\vlc{u_\pm}{\tchi}{\chi}{\ss}{\spsi}{\psis} &= \mp\e^{-\ii\pi/4},
\end{aligned}
\ee
\be
\begin{aligned}
\vlc{\tchi}{u_\pm}{\chi}{\si}{\ss}{\is} &=\pm \e^{-\ii\pi/4},\\
\vlc{\tchi}{u_\pm}{\chi}{\si}{\ss}{\psis} &=1,\\ 
\vlc{\tchi}{u_\pm}{\chi}{\spsi}{\ss}{\is} &=1,\\
\vlc{\tchi}{u_\pm}{\chi}{\spsi}{\ss}{\psis} &= \mp\e^{\ii\pi/4},
\end{aligned}
\ee
\be
\begin{aligned}
\vlc{\epsilon}{\chi}{\chi}{\ip}{\is}{\is} &=1,\\
\vlc{\epsilon}{\chi}{\chi}{\ip}{\psis}{\psis} &=-1,\\ 
\vlc{\epsilon}{\chi}{\chi}{\psii}{\is}{\psis} &=e^{\ii\pi/4},\\
\vlc{\epsilon}{\chi}{\chi}{\psii}{\psis}{\is} &= e^{-\ii\pi/4},
\end{aligned}
\ee
\be
\begin{aligned}
\vlc{\chi}{\epsilon}{\chi}{\is}{\ip}{\is} &= 1,\\
\vlc{\chi}{\epsilon}{\chi}{\is}{\psii}{\psis} &=\e^{-\ii\pi/4},\\ 
\vlc{\chi}{\epsilon}{\chi}{\psis}{\ip}{\psis} &=-1,\\
\vlc{\chi}{\epsilon}{\chi}{\psis}{\psii}{\is} &= \e^{\ii\pi/4},
\end{aligned}
\ee
\be
\begin{aligned}
\vlc{\epsilon}{\tchi}{\tchi}{\ip}{\si}{\spsi} &= \e^{-\ii\pi/4},\\
\vlc{\epsilon}{\tchi}{\tchi}{\ip}{\spsi}{\si} &=\e^{\ii\pi/4},\\ 
\vlc{\epsilon}{\tchi}{\tchi}{\psii}{\si}{\si} &=1,\\
\vlc{\epsilon}{\tchi}{\tchi}{\psii}{\spsi}{\spsi} &= -1,
\end{aligned}
\ee
\be
\begin{aligned}
\vlc{\tchi}{\epsilon}{\tchi}{\si}{\ip}{\spsi} &= e^{\ii\pi/4},\\
\vlc{\tchi}{\epsilon}{\tchi}{\si}{\psii}{\si} &=1,\\ 
\vlc{\tchi}{\epsilon}{\tchi}{\spsi}{\is}{\si} &=e^{-\ii\pi/4},\\
\vlc{\tchi}{\epsilon}{\tchi}{\spsi}{\psii}{\spsi} &= -1,
\end{aligned}
\ee
\be
\chi^{(\pp)(\is)}_{(\psis)}=\e^{-\ii\pi/4},\quad \chi^{(\is)(\pp)}_{(\psis)}=\e^{\ii\pi/4}.
\ee
\be
\chi^{(\pp)(\psis)}_{(\is)}=\e^{\ii\pi/4},\quad \chi^{(\psis)(\pp)}_{(\is)}=\e^{-\ii\pi/4}.
\ee
\be
\tchi^{(\pp)(\si)}_{(\spsi)}=\e^{\ii\pi/4},\quad \tchi^{(\si)(\pp)}_{(\spsi)}=\e^{-\ii\pi/4}.
\ee
\be
\tchi^{(\pp)(\spsi)}_{(\si)}=\e^{-\ii\pi/4},\quad \tchi^{(\spsi)(\pp)}_{(\si)}=\e^{\ii\pi/4}.
\ee
\be
\chi^{(\is)(\psis)}_{(\pp)}=\e^{-\ii\pi/4},\quad \chi^{(\psis)(\is)}_{(\pp)}=\e^{\ii\pi/4}.
\ee
\be
\tchi^{(\si)(\spsi)}_{(\pp)}=\e^{\ii\pi/4},\quad \tchi^{(\spsi)(\si)}_{(\pp)}=\e^{-\ii\pi/4}.
\ee
\be
\begin{aligned}
\vlc{\chi}{\chi}{\epsilon}{\is}{\is}{\ip} &=1,\\
\vlc{\chi}{\chi}{\epsilon}{\is}{\psis}{\psii} &=\e^{\ii\pi/4},\\ 
\vlc{\chi}{\chi}{\epsilon}{\psis}{\is}{\psii} &=\e^{-\ii\pi/4},\\
\vlc{\chi}{\chi}{\epsilon}{\psis}{\psis}{\ip} &=-1,
\end{aligned}
\ee
\be
\begin{aligned}
\vlc{\tchi}{\tchi}{\epsilon}{\si}{\si}{\psii} &= 1,\\
\vlc{\tchi}{\tchi}{\epsilon}{\si}{\spsi}{\ip} &=\e^{-\ii\pi/4},\\ 
\vlc{\tchi}{\tchi}{\epsilon}{\spsi}{\si}{\ip} &=\e^{\ii\pi/4},\\
\vlc{\tchi}{\tchi}{\epsilon}{\spsi}{\spsi}{\psii} &=-1,
\end{aligned}
\ee
\be
\begin{aligned}
\vlc{\chi}{\tchi}{u_\pm}{\is}{\si}{\ss} &= \pm\e^{-\ii\pi/4},\\
\vlc{\chi}{\tchi}{u_\pm}{\is}{\spsi}{\ss} &=1,\\ 
\vlc{\chi}{\tchi}{u_\pm}{\psis}{\si}{\ss} &=1,\\
\vlc{\chi}{\tchi}{u_\pm}{\psis}{\spsi}{\ss} &=\mp\e^{\ii\pi/4},
\end{aligned}
\ee
\be
\begin{aligned}
\vlc{\tchi}{\chi}{u_\pm}{\si}{\is}{\ss} &= \pm\e^{\ii\pi/4},\\
\vlc{\tchi}{\chi}{u_\pm}{\si}{\psis}{\ss} &=1,\\ 
\vlc{\tchi}{\chi}{u_\pm}{\spsi}{\is}{\ss} &=1,\\
\vlc{\tchi}{\chi}{u_\pm}{\spsi}{\psis}{\ss} &=\mp\e^{-\ii\pi/4},
\end{aligned}
\ee

\section{$U_q(SU(2))$ Topological data}\label{app:su2_kData}
In this appendix we record the formulae to obtain the topological data of the quantum groups $U_q(SU(2))$ with $q$ roots of unity. For each $q$ value, the quantum group has irreducible representations in one-to-one correspondence with those of the chiral algebra $SU(2)_k$ with $q=\e^{\frac{\ii 2\pi}{k+2}}$. Although we only dealt with the case with $k=8$ in the main part of the paper, we include the formulae for general $k\in\Z$ for completeness and future reference. As these formulae are well-known mathematical results, one may find them in the literature, e.g., Ref.\cite{Eliens2013}.  

$SU(2)_k$ has $k$ distinct topological sectors, namely $0,1,\dots,k$, where $0$ is the trivial or vacuum sector. The fusion algebra of these sectors reads
\be\label{eq:su2kfusion}
a\x b=c_{ab}+(c_{ab}+2)+\cdots+\min\{a+b,2k-a-b\},
\ee
where $c_{ab}=|a-b|$. Clearly this is a truncated tensor product of the usual $SU(2)$ irreducible representations. The multiplicity $N^c_{ab}=1$ if $|a-b|\leq c\leq \min \{a+b,2k-a-b\}$, $a+b+c=0\pmod{2}$, and $a+b+c\leq 2k$; otherwise, $N^c_{ab}=0$. A generic sector $a$ has quantum dimension and self-statistics respectively
\be\label{eq:su2kqdTheta}
\begin{aligned}
d_a=\frac{\sin(\frac{a+1}{k+2}\pi)}{\sin\frac{\pi}{k+2}},\\
\theta_a=\e^{\ii 2\pi\frac{a(a+2)}{4(k+2)}}.
\end{aligned}
\ee
The $R$ matrix elements are
\be\label{eq:su2kR}
R^{ab}_c=\ii^{c-a-b}q^{\frac{1}{8}[c(c+2)-a(a+2)-b(b+2)]}.
\ee
The $F$ symbols are given by the general formula:
\be\label{eq:su2kF}
\Fm{a}{b}{c}{d}{e}{f}=\ii^{a+b+c+d}\sqrt{\frac{d_ed_f}{d_ad_d}[a+1]_q[d+1]_q} \sixj{c}{e}{a}{b}{f}{d}^*,
\ee
where the $6j$ symbols
\begin{align*}
\sixj{c}{e}{a}{b}{f}{d}&=\Delta(c,e,a)\Delta(e,c,d)\Delta(e,b,d)\Delta(c,d,f)\\
\x \sum_z\Biggl\{ & \frac{(-1)^z[z+1]_q!}{[z-\frac{c+e+a}{2}]_q![z-\frac{a+b+f}{2}]_q! [z-\frac{e+b+d}{2}]_q!} \\
  &\x\frac{1}{[z-\frac{c+d+f}{2}]_q![\frac{c+e+b+f}{2}-z]_q!}\\
  &\x\frac{1}{[\frac{c+a+b+d}{2}-z]_q![\frac{e+a+f+d}{2}-z]_q!}\Biggr\}
\end{align*}
are defined with
\[
\Delta(a,b,c)=\sqrt\frac{[\frac{-a+b+c}{2}]_q![\frac{a-b+c}{2}]_q![\frac{a+b-c}{2}]_q!} {[\frac{a+b+c}{2}+1]_q!},
\]
which is invariant under permutation of its variables, and the $q$-numbers and $q$-factorials
\[
[n]_q=\frac{q^{n/2}-q^{-n/2}}{q^{1/2}-q{-1/2}},\quad [n]_q!=\prod_{m=1}^n[m]_q.
\]
By definition $[0]_q!\equiv 1$. Note that the sum over $z$ in the above expression of the $6j$ symbols is carried from $\max\{\frac{c+e+a}{2},\frac{a+b+f}{2},\frac{e+b+d}{2}, \frac{c+d+f}{2}\}$ to $\min\{\frac{c+e+b+f}{2},\frac{c+a+b+d}{2},\frac{e+a+f+d}{2}\}.$
\section{$SU(2)_8$ breaking VLCs}\label{app:su2_8vlc}
In what follows we catalog all the VLCs for $SU(2)_8$ breaking by condensing the sector $8$.

First we note that $t^{0 a}_b=t^{a0}_b=t^{ab}_0=\delta_{ab},\ \forall a,b\in t$. The other nonvanishing VLCs are listed as follows.
\be
\begin{aligned}
&\zeta^{81}_7=\zeta^{78}_1=\zeta^{17}_8=\frac{1+\ii}{\sqrt 2},\\ 
&\zeta^{87}_1 =\zeta^{18}_7=\zeta^{71}_8=-\frac{1-\ii}{\sqrt 2}.
\end{aligned}
\ee
\be
\begin{aligned}
&(\tau\tau)^{82}_6=(\tau\tau)^{68}_2=(\tau\tau)^{26}_8=\ii,\\
&(\tau\tau)^{86}_2=(\tau\tau)^{28}_6=(\tau\tau)^{62}_8=-\ii.
\end{aligned}
\ee
\be
\begin{aligned}
&\tilde\zeta^{83}_5=\tilde\zeta^{58}_3=\tilde\zeta^{35}_8=\frac{1-\ii}{\sqrt 2},\\
&\tilde\zeta^{85}_3=\tilde\zeta^{38}_5=\tilde\zeta^{53}_8=-\frac{1+\ii}{\sqrt 2}.
\end{aligned}
\ee
\be
\begin{aligned}
&(1\tau)^{84}_4=(1\tau)^{48}_4=(1\tau)^{44}_8=1,\\
&(\tau 1)^{84}_4=(\tau 1)^{48}_4=(\tau 1)^{44}_8=-1.
\end{aligned}
\ee
%%
%%%
\be
\begin{aligned}\vlc{\zeta}{\zeta}{\tau \tau}{1}{1}{2}&=-\frac{1}{\sqrt{2}},\\
\vlc{\zeta}{\zeta}{\tau \tau}{1}{7}{6}&=-\frac{(-1)^{3/4}}{\sqrt{2}},\\
\vlc{\zeta}{\zeta}{\tau \tau}{7}{1}{6}&=-\frac{\sqrt[4]{-1}}{\sqrt{2}},\\
\vlc{\zeta}{\zeta}{\tau \tau}{7}{7}{2}&=\frac{1}{\sqrt{2}},
\end{aligned}
\ee
\be
\begin{aligned}\vlc{\zeta}{\tau \tau}{\zeta}{1}{2}{1}&=-\frac{\sqrt[4]{-1}}{\sqrt{2}},\\
\vlc{\zeta}{\tau \tau}{\zeta}{1}{6}{7}&=-\frac{i}{\sqrt{2}},\\
\vlc{\zeta}{\tau \tau}{\zeta}{7}{2}{7}&=\frac{\sqrt[4]{-1}}{\sqrt{2}},\\
\vlc{\zeta}{\tau \tau}{\zeta}{7}{6}{1}&=\frac{1}{\sqrt{2}},
\end{aligned}
\ee
\be
\begin{aligned}\vlc{\zeta}{\tau \tau}{\tilde{\zeta }}{1}{2}{3}&=\frac{\sqrt[4]{-1}}{\sqrt{2}},\\
\vlc{\zeta}{\tau \tau}{\tilde{\zeta }}{1}{6}{5}&=\frac{1}{\sqrt{2}},\\
\vlc{\zeta}{\tau \tau}{\tilde{\zeta }}{7}{2}{5}&=-\frac{(-1)^{3/4}}{\sqrt{2}},\\
\vlc{\zeta}{\tau \tau}{\tilde{\zeta }}{7}{6}{3}&=\frac{1}{\sqrt{2}},
\end{aligned}
\ee
\be
\begin{aligned}\vlc{\zeta}{\tilde{\zeta }}{\tau \tau}{1}{3}{2}&=-\frac{i}{\sqrt{2}},\\
\vlc{\zeta}{\tilde{\zeta }}{\tau \tau}{1}{5}{6}&=\frac{(-1)^{3/4}}{\sqrt{2}},\\
\vlc{\zeta}{\tilde{\zeta }}{\tau \tau}{7}{3}{6}&=\frac{(-1)^{3/4}}{\sqrt{2}},\\
\vlc{\zeta}{\tilde{\zeta }}{\tau \tau}{7}{5}{2}&=\frac{1}{\sqrt{2}},
\end{aligned}
\ee
\be
\begin{aligned}\vlc{\zeta}{\tilde{\zeta }}{1\tau}{1}{3}{4}&=\frac{i}{2^{3/4}},\\
\vlc{\zeta}{\tilde{\zeta }}{1\tau}{1}{5}{4}&=-\frac{\sqrt[4]{-1}}{2^{3/4}},\\
\vlc{\zeta}{\tilde{\zeta }}{1\tau}{7}{3}{4}&=\frac{\sqrt[4]{-1}}{2^{3/4}},\\
\vlc{\zeta}{\tilde{\zeta }}{1\tau}{7}{5}{4}&=\frac{1}{2^{3/4}},
\end{aligned}
\ee
\be
\begin{aligned}\vlc{\zeta}{\tilde{\zeta }}{\tau 1}{1}{3}{4}&=\frac{i}{2^{3/4}},\\
\vlc{\zeta}{\tilde{\zeta }}{\tau 1}{1}{5}{4}&=\frac{\sqrt[4]{-1}}{2^{3/4}},\\
\vlc{\zeta}{\tilde{\zeta }}{\tau 1}{7}{3}{4}&=-\frac{\sqrt[4]{-1}}{2^{3/4}},\\
\vlc{\zeta}{\tilde{\zeta }}{\tau 1}{7}{5}{4}&=\frac{1}{2^{3/4}},
\end{aligned}
\ee
\be
\begin{aligned}\vlc{\zeta}{1\tau}{\tilde{\zeta }}{1}{4}{3}&=-\frac{i}{2^{3/4}},\\
\vlc{\zeta}{1\tau}{\tilde{\zeta }}{1}{4}{5}&=-\left(-\frac{1}{2}\right)^{3/4},\\
\vlc{\zeta}{1\tau}{\tilde{\zeta }}{7}{4}{3}&=\left(-\frac{1}{2}\right)^{3/4},\\
\vlc{\zeta}{1\tau}{\tilde{\zeta }}{7}{4}{5}&=\frac{1}{2^{3/4}},
\end{aligned}
\ee
\be
\begin{aligned}\vlc{\zeta}{\tau 1}{\tilde{\zeta }}{1}{4}{3}&=-\frac{i}{2^{3/4}},\\
\vlc{\zeta}{\tau 1}{\tilde{\zeta }}{1}{4}{5}&=\left(-\frac{1}{2}\right)^{3/4},\\
\vlc{\zeta}{\tau 1}{\tilde{\zeta }}{7}{4}{3}&=-\left(-\frac{1}{2}\right)^{3/4},\\
\vlc{\zeta}{\tau 1}{\tilde{\zeta }}{7}{4}{5}&=\frac{1}{2^{3/4}},
\end{aligned}
\ee
\be
\begin{aligned}\vlc{\tau \tau}{\zeta}{\zeta}{2}{1}{1}&=\frac{(-1)^{3/4}}{\sqrt{2}},\\
\vlc{\tau \tau}{\zeta}{\zeta}{2}{7}{7}&=-\frac{(-1)^{3/4}}{\sqrt{2}},\\
\vlc{\tau \tau}{\zeta}{\zeta}{6}{1}{7}&=-\frac{i}{\sqrt{2}},\\
\vlc{\tau \tau}{\zeta}{\zeta}{6}{7}{1}&=-\frac{1}{\sqrt{2}},
\end{aligned}
\ee
\be
\begin{aligned}\vlc{\tau \tau}{\zeta}{\tilde{\zeta }}{2}{1}{3}&=\frac{(-1)^{3/4}}{\sqrt{2}},\\
\vlc{\tau \tau}{\zeta}{\tilde{\zeta }}{2}{7}{5}&=-\frac{\sqrt[4]{-1}}{\sqrt{2}},\\
\vlc{\tau \tau}{\zeta}{\tilde{\zeta }}{6}{1}{5}&=\frac{1}{\sqrt{2}},\\
\vlc{\tau \tau}{\zeta}{\tilde{\zeta }}{6}{7}{3}&=\frac{1}{\sqrt{2}},
\end{aligned}
\ee
\be
\begin{aligned}\vlc{\tau \tau}{\tau \tau}{\tau \tau}{2}{2}{2}&=-\frac{1}{\sqrt{2}},\\
\vlc{\tau \tau}{\tau \tau}{\tau \tau}{2}{6}{6}&=\frac{1}{\sqrt{2}},\\
\vlc{\tau \tau}{\tau \tau}{\tau \tau}{6}{2}{6}&=\frac{1}{\sqrt{2}},\\
\vlc{\tau \tau}{\tau \tau}{\tau \tau}{6}{6}{2}&=\frac{1}{\sqrt{2}},
\end{aligned}
\ee
\be
\begin{aligned}\vlc{\tau \tau}{\tau \tau}{1\tau}{2}{2}{4}&=\frac{i}{2^{3/4}},\\
\vlc{\tau \tau}{\tau \tau}{1\tau}{2}{6}{4}&=-\frac{1}{2^{3/4}},\\
\vlc{\tau \tau}{\tau \tau}{1\tau}{6}{2}{4}&=\frac{1}{2^{3/4}},\\
\vlc{\tau \tau}{\tau \tau}{1\tau}{6}{6}{4}&=\frac{i}{2^{3/4}},
\end{aligned}
\ee
\be
\begin{aligned}\vlc{\tau \tau}{\tau \tau}{\tau 1}{2}{2}{4}&=\frac{1}{2^{3/4}},\\
\vlc{\tau \tau}{\tau \tau}{\tau 1}{2}{6}{4}&=-\frac{i}{2^{3/4}},\\
\vlc{\tau \tau}{\tau \tau}{\tau 1}{6}{2}{4}&=\frac{i}{2^{3/4}},\\
\vlc{\tau \tau}{\tau \tau}{\tau 1}{6}{6}{4}&=\frac{1}{2^{3/4}},
\end{aligned}
\ee
\be
\begin{aligned}\vlc{\tau \tau}{\tilde{\zeta }}{\zeta}{2}{3}{1}&=\frac{\sqrt[4]{-1}}{\sqrt{2}},\\
\vlc{\tau \tau}{\tilde{\zeta }}{\zeta}{2}{5}{7}&=-\frac{(-1)^{3/4}}{\sqrt{2}},\\
\vlc{\tau \tau}{\tilde{\zeta }}{\zeta}{6}{3}{7}&=\frac{1}{\sqrt{2}},\\
\vlc{\tau \tau}{\tilde{\zeta }}{\zeta}{6}{5}{1}&=\frac{1}{\sqrt{2}},
\end{aligned}
\ee
\be
\begin{aligned}\vlc{\tau \tau}{\tilde{\zeta }}{\tilde{\zeta }}{2}{3}{3}_1&=-\frac{\sqrt[4]{-1}}{\sqrt{2}},\\
\vlc{\tau \tau}{\tilde{\zeta }}{\tilde{\zeta }}{2}{3}{5}_1&=0,\\
\vlc{\tau \tau}{\tilde{\zeta }}{\tilde{\zeta }}{2}{5}{3}_1&=0,\\
\vlc{\tau \tau}{\tilde{\zeta }}{\tilde{\zeta }}{2}{5}{5}_1&=\frac{\sqrt[4]{-1}}{\sqrt{2}},\\
\vlc{\tau \tau}{\tilde{\zeta }}{\tilde{\zeta }}{6}{3}{3}_1&=0,\\
\vlc{\tau \tau}{\tilde{\zeta }}{\tilde{\zeta }}{6}{3}{5}_1&=-\frac{i}{\sqrt{2}},\\
\vlc{\tau \tau}{\tilde{\zeta }}{\tilde{\zeta }}{6}{5}{3}_1&=\frac{1}{\sqrt{2}},\\
\vlc{\tau \tau}{\tilde{\zeta }}{\tilde{\zeta }}{6}{5}{5}_1&=0,
\end{aligned}
\ee
\be
\begin{aligned}\vlc{\tau \tau}{\tilde{\zeta }}{\tilde{\zeta }}{2}{3}{3}_2&=0,\\
\vlc{\tau \tau}{\tilde{\zeta }}{\tilde{\zeta }}{2}{3}{5}_2&=\frac{\sqrt[4]{-1}}{\sqrt{2}},\\
\vlc{\tau \tau}{\tilde{\zeta }}{\tilde{\zeta }}{2}{5}{3}_2&=\frac{(-1)^{3/4}}{\sqrt{2}},\\
\vlc{\tau \tau}{\tilde{\zeta }}{\tilde{\zeta }}{2}{5}{5}_2&=0,\\
\vlc{\tau \tau}{\tilde{\zeta }}{\tilde{\zeta }}{6}{3}{3}_2&=-\frac{1}{\sqrt{2}},\\
\vlc{\tau \tau}{\tilde{\zeta }}{\tilde{\zeta }}{6}{3}{5}_2&=0,\\
\vlc{\tau \tau}{\tilde{\zeta }}{\tilde{\zeta }}{6}{5}{3}_2&=0,\\
\vlc{\tau \tau}{\tilde{\zeta }}{\tilde{\zeta }}{6}{5}{5}_2&=\frac{1}{\sqrt{2}},
\end{aligned}
\ee
\be
\begin{aligned}\vlc{\tau \tau}{1\tau}{\tau \tau}{2}{4}{2}&=\frac{1}{2^{3/4}},\\
\vlc{\tau \tau}{1\tau}{\tau \tau}{2}{4}{6}&=-\frac{i}{2^{3/4}},\\
\vlc{\tau \tau}{1\tau}{\tau \tau}{6}{4}{2}&=\frac{i}{2^{3/4}},\\
\vlc{\tau \tau}{1\tau}{\tau \tau}{6}{4}{6}&=\frac{1}{2^{3/4}},
\end{aligned}
\ee
\be
\begin{aligned}\vlc{\tau \tau}{1\tau}{\tau 1}{2}{4}{4}&=\frac{i}{\sqrt{2}},\\
\vlc{\tau \tau}{1\tau}{\tau 1}{6}{4}{4}&=\frac{1}{\sqrt{2}},
\end{aligned}
\ee
\be
\begin{aligned}\vlc{\tau \tau}{\tau 1}{\tau \tau}{2}{4}{2}&=\frac{1}{2^{3/4}},\\
\vlc{\tau \tau}{\tau 1}{\tau \tau}{2}{4}{6}&=\frac{i}{2^{3/4}},\\
\vlc{\tau \tau}{\tau 1}{\tau \tau}{6}{4}{2}&=-\frac{i}{2^{3/4}},\\
\vlc{\tau \tau}{\tau 1}{\tau \tau}{6}{4}{6}&=\frac{1}{2^{3/4}},
\end{aligned}
\ee
\be
\begin{aligned}\vlc{\tau \tau}{\tau 1}{1\tau}{2}{4}{4}&=-\frac{i}{\sqrt{2}},\\
\vlc{\tau \tau}{\tau 1}{1\tau}{6}{4}{4}&=\frac{1}{\sqrt{2}},
\end{aligned}
\ee
\be
\begin{aligned}\vlc{\tilde{\zeta }}{\zeta}{\tau \tau}{3}{1}{2}&=-\frac{i}{\sqrt{2}},\\
\vlc{\tilde{\zeta }}{\zeta}{\tau \tau}{3}{7}{6}&=-\frac{\sqrt[4]{-1}}{\sqrt{2}},\\
\vlc{\tilde{\zeta }}{\zeta}{\tau \tau}{5}{1}{6}&=-\frac{\sqrt[4]{-1}}{\sqrt{2}},\\
\vlc{\tilde{\zeta }}{\zeta}{\tau \tau}{5}{7}{2}&=-\frac{1}{\sqrt{2}},
\end{aligned}
\ee
\be
\begin{aligned}\vlc{\tilde{\zeta }}{\zeta}{1\tau}{3}{1}{4}&=-\frac{i}{2^{3/4}},\\
\vlc{\tilde{\zeta }}{\zeta}{1\tau}{3}{7}{4}&=\left(-\frac{1}{2}\right)^{3/4},\\
\vlc{\tilde{\zeta }}{\zeta}{1\tau}{5}{1}{4}&=-\left(-\frac{1}{2}\right)^{3/4},\\
\vlc{\tilde{\zeta }}{\zeta}{1\tau}{5}{7}{4}&=\frac{1}{2^{3/4}},
\end{aligned}
\ee
\be
\begin{aligned}\vlc{\tilde{\zeta }}{\zeta}{\tau 1}{3}{1}{4}&=-\frac{i}{2^{3/4}},\\
\vlc{\tilde{\zeta }}{\zeta}{\tau 1}{3}{7}{4}&=-\left(-\frac{1}{2}\right)^{3/4},\\
\vlc{\tilde{\zeta }}{\zeta}{\tau 1}{5}{1}{4}&=\left(-\frac{1}{2}\right)^{3/4},\\
\vlc{\tilde{\zeta }}{\zeta}{\tau 1}{5}{7}{4}&=\frac{1}{2^{3/4}},
\end{aligned}
\ee
\be
\begin{aligned}\vlc{\tilde{\zeta }}{\tau \tau}{\zeta}{3}{2}{1}&=\frac{(-1)^{3/4}}{\sqrt{2}},\\
\vlc{\tilde{\zeta }}{\tau \tau}{\zeta}{3}{6}{7}&=\frac{1}{\sqrt{2}},\\
\vlc{\tilde{\zeta }}{\tau \tau}{\zeta}{5}{2}{7}&=-\frac{\sqrt[4]{-1}}{\sqrt{2}},\\
\vlc{\tilde{\zeta }}{\tau \tau}{\zeta}{5}{6}{1}&=\frac{1}{\sqrt{2}},
\end{aligned}
\ee
\be
\begin{aligned}\vlc{\tilde{\zeta }}{\tau \tau}{\tilde{\zeta }}{3}{2}{3}_1&=\frac{(-1)^{3/4}}{\sqrt{2}},\\
\vlc{\tilde{\zeta }}{\tau \tau}{\tilde{\zeta }}{3}{2}{5}_1&=0,\\
\vlc{\tilde{\zeta }}{\tau \tau}{\tilde{\zeta }}{3}{6}{3}_1&=0,\\
\vlc{\tilde{\zeta }}{\tau \tau}{\tilde{\zeta }}{3}{6}{5}_1&=-\frac{i}{\sqrt{2}},\\
\vlc{\tilde{\zeta }}{\tau \tau}{\tilde{\zeta }}{5}{2}{3}_1&=0,\\
\vlc{\tilde{\zeta }}{\tau \tau}{\tilde{\zeta }}{5}{2}{5}_1&=-\frac{(-1)^{3/4}}{\sqrt{2}},\\
\vlc{\tilde{\zeta }}{\tau \tau}{\tilde{\zeta }}{5}{6}{3}_1&=-\frac{1}{\sqrt{2}},\\
\vlc{\tilde{\zeta }}{\tau \tau}{\tilde{\zeta }}{5}{6}{5}_1&=0,
\end{aligned}
\ee
\be
\begin{aligned}\vlc{\tilde{\zeta }}{\tau \tau}{\tilde{\zeta }}{3}{2}{3}_2&=0,\\
\vlc{\tilde{\zeta }}{\tau \tau}{\tilde{\zeta }}{3}{2}{5}_2&=\frac{(-1)^{3/4}}{\sqrt{2}},\\
\vlc{\tilde{\zeta }}{\tau \tau}{\tilde{\zeta }}{3}{6}{3}_2&=-\frac{1}{\sqrt{2}},\\
\vlc{\tilde{\zeta }}{\tau \tau}{\tilde{\zeta }}{3}{6}{5}_2&=0,\\
\vlc{\tilde{\zeta }}{\tau \tau}{\tilde{\zeta }}{5}{2}{3}_2&=\frac{\sqrt[4]{-1}}{\sqrt{2}},\\
\vlc{\tilde{\zeta }}{\tau \tau}{\tilde{\zeta }}{5}{2}{5}_2&=0,\\
\vlc{\tilde{\zeta }}{\tau \tau}{\tilde{\zeta }}{5}{6}{3}_2&=0,\\
\vlc{\tilde{\zeta }}{\tau \tau}{\tilde{\zeta }}{5}{6}{5}_2&=\frac{1}{\sqrt{2}},
\end{aligned}
\ee
\be
\begin{aligned}\vlc{\tilde{\zeta }}{\tilde{\zeta }}{\tau \tau}{3}{3}{2}_1&=-\frac{1}{\sqrt{2}},\\
\vlc{\tilde{\zeta }}{\tilde{\zeta }}{\tau \tau}{3}{3}{6}_1&=0,\\
\vlc{\tilde{\zeta }}{\tilde{\zeta }}{\tau \tau}{3}{5}{2}_1&=0,\\
\vlc{\tilde{\zeta }}{\tilde{\zeta }}{\tau \tau}{3}{5}{6}_1&=-\frac{\sqrt[4]{-1}}{\sqrt{2}},\\
\vlc{\tilde{\zeta }}{\tilde{\zeta }}{\tau \tau}{5}{3}{2}_1&=0,\\
\vlc{\tilde{\zeta }}{\tilde{\zeta }}{\tau \tau}{5}{3}{6}_1&=-\frac{(-1)^{3/4}}{\sqrt{2}},\\
\vlc{\tilde{\zeta }}{\tilde{\zeta }}{\tau \tau}{5}{5}{2}_1&=\frac{1}{\sqrt{2}},\\
\vlc{\tilde{\zeta }}{\tilde{\zeta }}{\tau \tau}{5}{5}{6}_1&=0,
\end{aligned}
\ee
\be
\begin{aligned}\vlc{\tilde{\zeta }}{\tilde{\zeta }}{\tau \tau}{3}{3}{2}_2&=0,\\
\vlc{\tilde{\zeta }}{\tilde{\zeta }}{\tau \tau}{3}{3}{6}_2&=-\frac{1}{\sqrt{2}},\\
\vlc{\tilde{\zeta }}{\tilde{\zeta }}{\tau \tau}{3}{5}{2}_2&=\frac{\sqrt[4]{-1}}{\sqrt{2}},\\
\vlc{\tilde{\zeta }}{\tilde{\zeta }}{\tau \tau}{3}{5}{6}_2&=0,\\
\vlc{\tilde{\zeta }}{\tilde{\zeta }}{\tau \tau}{5}{3}{2}_2&=\frac{(-1)^{3/4}}{\sqrt{2}},\\
\vlc{\tilde{\zeta }}{\tilde{\zeta }}{\tau \tau}{5}{3}{6}_2&=0,\\
\vlc{\tilde{\zeta }}{\tilde{\zeta }}{\tau \tau}{5}{5}{2}_2&=0,\\
\vlc{\tilde{\zeta }}{\tilde{\zeta }}{\tau \tau}{5}{5}{6}_2&=\frac{1}{\sqrt{2}},
\end{aligned}
\ee
\be
\begin{aligned}\vlc{\tilde{\zeta }}{\tilde{\zeta }}{1\tau}{3}{3}{4}&=\frac{i}{2^{3/4}},\\
\vlc{\tilde{\zeta }}{\tilde{\zeta }}{1\tau}{3}{5}{4}&=\frac{\sqrt[4]{-1}}{2^{3/4}},\\
\vlc{\tilde{\zeta }}{\tilde{\zeta }}{1\tau}{5}{3}{4}&=-\left(-\frac{1}{2}\right)^{3/4},\\
\vlc{\tilde{\zeta }}{\tilde{\zeta }}{1\tau}{5}{5}{4}&=\frac{i}{2^{3/4}},
\end{aligned}
\ee
\be
\begin{aligned}\vlc{\tilde{\zeta }}{\tilde{\zeta }}{\tau 1}{3}{3}{4}&=\frac{1}{2^{3/4}},\\
\vlc{\tilde{\zeta }}{\tilde{\zeta }}{\tau 1}{3}{5}{4}&=\left(-\frac{1}{2}\right)^{3/4},\\
\vlc{\tilde{\zeta }}{\tilde{\zeta }}{\tau 1}{5}{3}{4}&=\frac{\sqrt[4]{-1}}{2^{3/4}},\\
\vlc{\tilde{\zeta }}{\tilde{\zeta }}{\tau 1}{5}{5}{4}&=\frac{1}{2^{3/4}},
\end{aligned}
\ee
\be
\begin{aligned}\vlc{\tilde{\zeta }}{1\tau}{\zeta}{3}{4}{1}&=\frac{i}{2^{3/4}},\\
\vlc{\tilde{\zeta }}{1\tau}{\zeta}{3}{4}{7}&=\frac{\sqrt[4]{-1}}{2^{3/4}},\\
\vlc{\tilde{\zeta }}{1\tau}{\zeta}{5}{4}{1}&=-\frac{\sqrt[4]{-1}}{2^{3/4}},\\
\vlc{\tilde{\zeta }}{1\tau}{\zeta}{5}{4}{7}&=\frac{1}{2^{3/4}},
\end{aligned}
\ee
\be
\begin{aligned}\vlc{\tilde{\zeta }}{1\tau}{\tilde{\zeta }}{3}{4}{3}&=\frac{1}{2^{3/4}},\\
\vlc{\tilde{\zeta }}{1\tau}{\tilde{\zeta }}{3}{4}{5}&=-\frac{\sqrt[4]{-1}}{2^{3/4}},\\
\vlc{\tilde{\zeta }}{1\tau}{\tilde{\zeta }}{5}{4}{3}&=-\left(-\frac{1}{2}\right)^{3/4},\\
\vlc{\tilde{\zeta }}{1\tau}{\tilde{\zeta }}{5}{4}{5}&=\frac{1}{2^{3/4}},
\end{aligned}
\ee
\be
\begin{aligned}\vlc{\tilde{\zeta }}{\tau 1}{\zeta}{3}{4}{1}&=\frac{i}{2^{3/4}},\\
\vlc{\tilde{\zeta }}{\tau 1}{\zeta}{3}{4}{7}&=-\frac{\sqrt[4]{-1}}{2^{3/4}},\\
\vlc{\tilde{\zeta }}{\tau 1}{\zeta}{5}{4}{1}&=\frac{\sqrt[4]{-1}}{2^{3/4}},\\
\vlc{\tilde{\zeta }}{\tau 1}{\zeta}{5}{4}{7}&=\frac{1}{2^{3/4}},
\end{aligned}
\ee
\be
\begin{aligned}\vlc{\tilde{\zeta }}{\tau 1}{\tilde{\zeta }}{3}{4}{3}&=\frac{1}{2^{3/4}},\\
\vlc{\tilde{\zeta }}{\tau 1}{\tilde{\zeta }}{3}{4}{5}&=\frac{\sqrt[4]{-1}}{2^{3/4}},\\
\vlc{\tilde{\zeta }}{\tau 1}{\tilde{\zeta }}{5}{4}{3}&=\left(-\frac{1}{2}\right)^{3/4},\\
\vlc{\tilde{\zeta }}{\tau 1}{\tilde{\zeta }}{5}{4}{5}&=\frac{1}{2^{3/4}},
\end{aligned}
\ee
\be
\begin{aligned}\vlc{1\tau}{\zeta}{\tilde{\zeta }}{4}{1}{3}&=\frac{i}{2^{3/4}},\\
\vlc{1\tau}{\zeta}{\tilde{\zeta }}{4}{1}{5}&=-\frac{\sqrt[4]{-1}}{2^{3/4}},\\
\vlc{1\tau}{\zeta}{\tilde{\zeta }}{4}{7}{3}&=\frac{\sqrt[4]{-1}}{2^{3/4}},\\
\vlc{1\tau}{\zeta}{\tilde{\zeta }}{4}{7}{5}&=\frac{1}{2^{3/4}},
\end{aligned}
\ee
\be
\begin{aligned}\vlc{1\tau}{\tau \tau}{\tau \tau}{4}{2}{2}&=-\frac{1}{2^{3/4}},\\
\vlc{1\tau}{\tau \tau}{\tau \tau}{4}{2}{6}&=-\frac{i}{2^{3/4}},\\
\vlc{1\tau}{\tau \tau}{\tau \tau}{4}{6}{2}&=\frac{i}{2^{3/4}},\\
\vlc{1\tau}{\tau \tau}{\tau \tau}{4}{6}{6}&=-\frac{1}{2^{3/4}},
\end{aligned}
\ee
\be
\begin{aligned}\vlc{1\tau}{\tau \tau}{\tau 1}{4}{2}{4}&=-\frac{i}{\sqrt{2}},\\
\vlc{1\tau}{\tau \tau}{\tau 1}{4}{6}{4}&=\frac{1}{\sqrt{2}},
\end{aligned}
\ee
\be
\begin{aligned}\vlc{1\tau}{\tilde{\zeta }}{\zeta}{4}{3}{1}&=-\frac{i}{2^{3/4}},\\
\vlc{1\tau}{\tilde{\zeta }}{\zeta}{4}{3}{7}&=\left(-\frac{1}{2}\right)^{3/4},\\
\vlc{1\tau}{\tilde{\zeta }}{\zeta}{4}{5}{1}&=-\left(-\frac{1}{2}\right)^{3/4},\\
\vlc{1\tau}{\tilde{\zeta }}{\zeta}{4}{5}{7}&=\frac{1}{2^{3/4}},
\end{aligned}
\ee
\be
\begin{aligned}\vlc{1\tau}{\tilde{\zeta }}{\tilde{\zeta }}{4}{3}{3}&=-\frac{1}{2^{3/4}},\\
\vlc{1\tau}{\tilde{\zeta }}{\tilde{\zeta }}{4}{3}{5}&=\left(-\frac{1}{2}\right)^{3/4},\\
\vlc{1\tau}{\tilde{\zeta }}{\tilde{\zeta }}{4}{5}{3}&=\frac{\sqrt[4]{-1}}{2^{3/4}},\\
\vlc{1\tau}{\tilde{\zeta }}{\tilde{\zeta }}{4}{5}{5}&=-\frac{1}{2^{3/4}},
\end{aligned}
\ee
\be
\begin{aligned}\vlc{1\tau}{1\tau}{1\tau}{4}{4}{4}&=\frac{1}{\sqrt[4]{2}},
\end{aligned}
\ee
\be
\begin{aligned}\vlc{1\tau}{\tau 1}{\tau \tau}{4}{4}{2}&=\frac{i}{\sqrt{2}},\\
\vlc{1\tau}{\tau 1}{\tau \tau}{4}{4}{6}&=\frac{1}{\sqrt{2}},
\end{aligned}
\ee
\be
\begin{aligned}\vlc{\tau 1}{\zeta}{\tilde{\zeta }}{4}{1}{3}&=\frac{i}{2^{3/4}},\\
\vlc{\tau 1}{\zeta}{\tilde{\zeta }}{4}{1}{5}&=\frac{\sqrt[4]{-1}}{2^{3/4}},\\
\vlc{\tau 1}{\zeta}{\tilde{\zeta }}{4}{7}{3}&=-\frac{\sqrt[4]{-1}}{2^{3/4}},\\
\vlc{\tau 1}{\zeta}{\tilde{\zeta }}{4}{7}{5}&=\frac{1}{2^{3/4}},
\end{aligned}
\ee
\be
\begin{aligned}\vlc{\tau 1}{\tau \tau}{\tau \tau}{4}{2}{2}&=\frac{1}{2^{3/4}},\\
\vlc{\tau 1}{\tau \tau}{\tau \tau}{4}{2}{6}&=-\frac{i}{2^{3/4}},\\
\vlc{\tau 1}{\tau \tau}{\tau \tau}{4}{6}{2}&=\frac{i}{2^{3/4}},\\
\vlc{\tau 1}{\tau \tau}{\tau \tau}{4}{6}{6}&=\frac{1}{2^{3/4}},
\end{aligned}
\ee
\be
\begin{aligned}\vlc{\tau 1}{\tau \tau}{1\tau}{4}{2}{4}&=\frac{i}{\sqrt{2}},\\
\vlc{\tau 1}{\tau \tau}{1\tau}{4}{6}{4}&=\frac{1}{\sqrt{2}},
\end{aligned}
\ee
\be
\begin{aligned}\vlc{\tau 1}{\tilde{\zeta }}{\zeta}{4}{3}{1}&=-\frac{i}{2^{3/4}},\\
\vlc{\tau 1}{\tilde{\zeta }}{\zeta}{4}{3}{7}&=-\left(-\frac{1}{2}\right)^{3/4},\\
\vlc{\tau 1}{\tilde{\zeta }}{\zeta}{4}{5}{1}&=\left(-\frac{1}{2}\right)^{3/4},\\
\vlc{\tau 1}{\tilde{\zeta }}{\zeta}{4}{5}{7}&=\frac{1}{2^{3/4}},
\end{aligned}
\ee
\be
\begin{aligned}\vlc{\tau 1}{\tilde{\zeta }}{\tilde{\zeta }}{4}{3}{3}&=\frac{1}{2^{3/4}},\\
\vlc{\tau 1}{\tilde{\zeta }}{\tilde{\zeta }}{4}{3}{5}&=\left(-\frac{1}{2}\right)^{3/4},\\
\vlc{\tau 1}{\tilde{\zeta }}{\tilde{\zeta }}{4}{5}{3}&=\frac{\sqrt[4]{-1}}{2^{3/4}},\\
\vlc{\tau 1}{\tilde{\zeta }}{\tilde{\zeta }}{4}{5}{5}&=\frac{1}{2^{3/4}},
\end{aligned}
\ee
\be
\begin{aligned}\vlc{\tau 1}{1\tau}{\tau \tau}{4}{4}{2}&=-\frac{i}{\sqrt{2}},\\
\vlc{\tau 1}{1\tau}{\tau \tau}{4}{4}{6}&=\frac{1}{\sqrt{2}},
\end{aligned}
\ee
\be
\begin{aligned}\vlc{\tau 1}{\tau 1}{\tau 1}{4}{4}{4}&=\frac{1}{\sqrt[4]{2}},
\end{aligned}
\ee

\end{appendix}

\bibliographystyle{apsrev}
\bibliography{StringNet}
\end{document}